  \providecommand\BibTeX{{%
    \normalfont B\kern-0.5em{\scshape i\kern-0.25em b}\kern-0.8em\TeX}}}
\newtheorem{dft}{Definition}
\newtheorem{thm}{Theorem}
\newtheorem{lmm}{Lemma}
\renewcommand\footnotetextcopyrightpermission[1]{} 
\begin{document}
\title{Regular Path Query Evaluation on Streaming Graphs}
\titlenote{A shorter version of this paper has been accepted for publication in \emph{2020 International Conference
on Management of Data (SIGMOD ’20)}.}

\iftrue 
\author{Anil Pacaci}

\affiliation{%
  \institution{University of Waterloo}
}

\email{apacaci@uwaterloo.ca}

\author{Angela Bonifati}

\affiliation{%
  \institution{Lyon 1 University}
}

\email{angela.bonifati@univ-lyon1.fr}

\author{M. Tamer \"{O}zs{u}}

\affiliation{%
  \institution{University of Waterloo}
}
\email{tamer.ozsu@uwaterloo.ca}

\fi 


\begin{abstract}
    
    
    We study persistent query evaluation over streaming graphs, which is becoming increasingly important. We focus on navigational queries that determine if there exists  a path  between two entities that satisfies a user-specified constraint.
    We adopt the Regular Path Query (RPQ) model that specifies navigational patterns with labeled constraints. We propose deterministic algorithms to efficiently evaluate persistent RPQs under both \textit{arbitrary} and \textit{simple} path semantics in a uniform manner.  Experimental analysis on real and synthetic streaming graphs shows that the proposed algorithms can process up to tens of thousands of edges per second and efficiently answer RPQs that are commonly used in real-world workloads. 
\end{abstract}

\maketitle


\section{Introduction}
\label{sec:intro}

Graphs are used to model complex interactions in various domains ranging from social network analysis to communication network monitoring, from retailer customer analysis to bioinformatics.
%
Many real-world applications generate graphs over time as new edges are produced resulting in \textit{streaming graph}s  \cite{sahu:2018aa}. 
Consider an e-commerce application: each user and item can be modelled as a vertex and each user interaction such as clicks, reviews, purchases can be modelled as an edge.
The system receives and processes a sequence of graph edges (as users purchase items, like them, etc). 
These graphs are unbounded, and the edge arrival rates can be very high:  Twitter’s recommendation system ingests 12K events/sec on average \cite{grewal2018recservice}, Alibaba's user-product graph processes 30K edges/sec at its peak \cite{qiu2018real}. Recent experiments show that existing graph DBMSs are not able to keep up with the arrival rates of many real streaming graphs  \cite{pacaci:2017aa}.

Efficient querying of streaming graphs is a crucial task for applications that monitor complex patterns and, in particular, \textit{persistent queries} that are registered to the system and whose results are generated incrementally as the graph edges arrive.
Querying streaming data in real-time imposes novel requirements in addition to challenges of graph processing: (i) graph edges arrive at a very high rate and real-time answers are required as the graph emerges, and (ii) graph streams are unbounded, making it infeasible to employ batch algorithms on the entire stream. 
Most existing work focus on the \textit{snapshot} model, which assumes that graphs are static and fully available, and adhoc queries reflect the current state of the database (e.g., \cite{cohen2003, yakovets2016query, vldb10_yildirim:2010, icde13_6544893, su2016reachability, Wadhwa:2019:EAR:3299869.3319882, koschmieder2012regular}).
The \textit{dynamic graph} model addresses the evolving nature of these graphs; however, algorithms in this model assume that the entire graph is fully available and they compute how the output changes as the graph is updated \cite{kapron2013dynamic, bernstein2016maintaining, roditty2016fully, lkacki2011improved}.

In this paper, we study the problem of persistent 
query processing over streaming graphs, addressing the limitations of existing approaches.
We adopt the Regular Path Query (RPQ) model that focuses on  \textit{path navigation}, e.g., finding pairs of users in a network connected by a path whose label (i.e., the labels of edges in the path) matches path constraints. 
RPQ specifies path constraints that are expressed using a regular expression over the alphabet of edge labels and checks whether a path exists with a label that satisfies the given regular expression  \cite{mendelzon1995finding, baeza13}. The RPQ model provides the basic navigational mechanism to encode graph queries, striking a balance between expressiveness and computational complexity \cite{angles2017foundations, bonifati2018querying,  sparql11-w3c, angles2018g, grades2016_van2016pgql}.
Consider the streaming graph of a social network application presented in Figure \ref{fig:stream_graph}. 
The query $Q_1: (\textit{follows} \circ \textit{mentions})^+$ in Figure \ref{fig:query_graph} represents a pattern for a real-time notification query where  user \texttt{x} is notified of other users who are connected by a path whose edge labels are even lengths of alternating $\textit{follows}$ and $\textit{mentions}$.
At time $t=18$, the pair of users $(x,y)$  is connected by such a path, shown by bold edges 
in Figure \ref{fig:data_graph}.

\begin{figure*}
    \centering
    \subfigure[Streaming Graph $S$]{
        \includegraphics[width=2.2in]{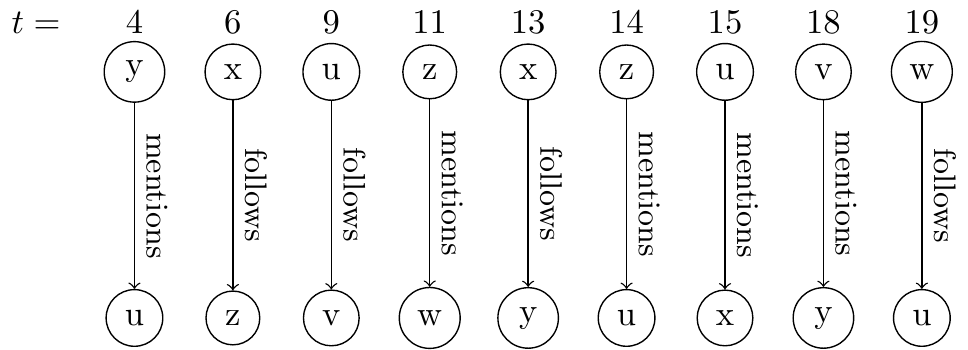}
        \label{fig:stream_graph}
    }
        \subfigure[Snapshot Graph $G$]{
        \includegraphics[width=1.4in]{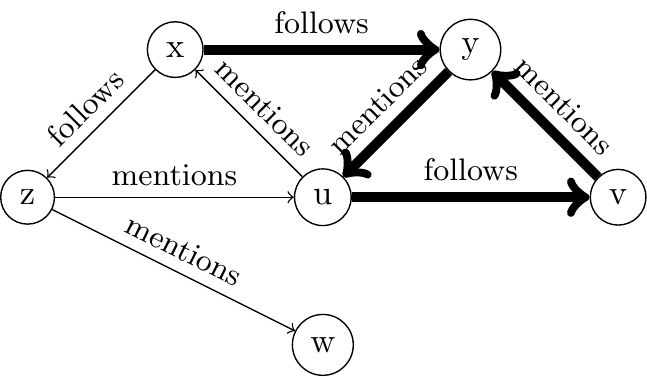}
        \label{fig:data_graph}
    }
    \subfigure[Query Graph $Q_1$]{
        \includegraphics[width=1.4in]{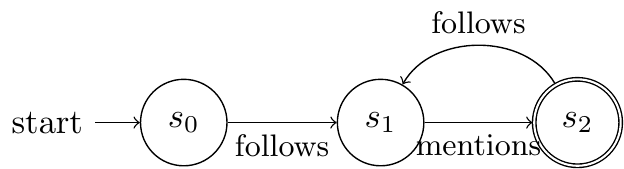}
        \label{fig:query_graph}
    }
    \subfigure[Product Graph $P_{G,A}$]{
        \includegraphics[width=1.5in]{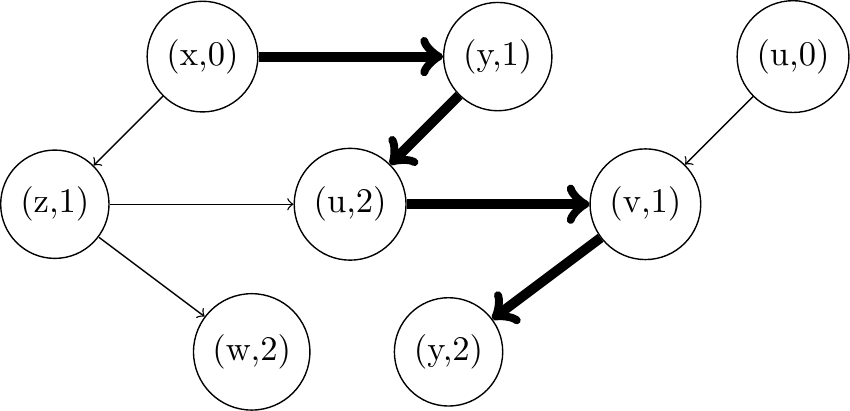}
        \label{fig:product_graph}
    }
    \caption{(a) A streaming graph $S$ of a social networking application, (b) the snapshot of $S$ at $t=18$, (c) automaton for the query $Q_1: (\textit{follows} \circ \textit{mentions})^+$, and (d) the product graph $P_{G,A}$ .}
    \label{fig:running_example}
\end{figure*}

It is known that for many streaming algorithms the space requirement is lower bounded by the stream size \cite{bbd+02}. Since the stream is unbounded, deterministic RPQ evaluation is infeasible without storing all the edges of the graph (by reduction to the length-2 path problem that is infeasible in sublinear space \cite{feigenbaum2005graph}).
In streaming systems, a general solution for bounding the space requirement is to evaluate queries on a \textit{window} of  data from the stream. In a large number of applications, focusing on the most recent data is  desirable. 
Thus, the  windowed evaluation model not only provides a tool to process unbounded streams with bounded memory but also restricts the scope of queries on recent data, a desired feature in many streaming applications. 
In this paper we consider the \textit{time-based sliding window} model where a fixed size (in terms of time units) window is defined that slides at well-defined intervals \cite{golabo03a}. In our context,  new graph edges enter the window during the window interval, and when the window slides, some of the ``old'' edges leave the window (i.e., expire). Managing this window processing as part of RPQ evaluation is challenging and our solutions address the issue in a uniform manner.

In this paper, for the first time, we study the design space of persistent RPQ evaluation algorithms in two main dimensions: the  path semantics they support and the result semantics based on application requirements.
Along the first dimension, we propose efficient incremental algorithms for both \textit{arbitrary} and \textit{simple} path semantics.
The former 
allows a path to traverse the same vertex multiple times, whereas under the latter semantics a path cannot traverse the same vertex more than once  \cite{angles2017foundations}.
Consider the example graph given in Figure \ref{fig:data_graph}; the sequence of vertices $\langle x,y,u,v,y \rangle$ is a valid path for query $Q_1$ with arbitrary path semantics whereas the simple path semantics does not traverse this path as it visits vertex $y$ twice. 
Along the second dimension, we consider \textit{append-only} streams where tuples in the window expire only due to window movements, then extend our algorithms to support \textit{explicit deletions} to deal with cases where users/applications might explicitly delete a previously arrived edge. We use the \textit{negative tuples} approach \cite{golab:2010fj}  to process explicit deletions.
Table {\ref{tab:amortized}} presents the combined complexities of the proposed algorithms in each quadrant in terms of amortized cost.
\begin{table}
\caption{Amortized time complexities of the proposed algorithms for a streaming graph $S$ with $m$ edges and $n$ vertices and RPQ $Q_R$ whose  automata has $k$ states.}
\small
	\centering
	\begin{tabular}{ | r | c | c | }
	\hline
       \diagbox{\footnotesize Path\\Semantics}{\footnotesize Result\\Semantics} & Append-Only &  Explicit Deletions \\
	\hline
	Arbitrary (\S \ref{sec:arbitrary}) & $\mathcal{O}(n \cdot k^2)$ & $\mathcal{O}(n^2 \cdot k)$  \\
	Simple\tablefootnote{These results hold in the absence of conflicts, a condition on cyclic structure of the query and graph that is precisely defined in \S \ref{sec:simple-insertonly}.}  (\S \ref{sec:simple}) & $\mathcal{O}(n \cdot k^2)$ & $\mathcal{O}(n^2 \cdot k)$. \\
	\hline
	\end{tabular}
\label{tab:amortized}
\end{table}

To the best of our knowledge, these are the first streaming algorithms to address RPQ evaluation on sliding windows over streaming graphs under both arbitrary (\S \ref{sec:arbitrary}) and simple path semantics (\S \ref{sec:simple}).
Our proposed algorithm for streaming RPQ evaluation under arbitrary path semantics incrementally maintains results for a query $Q_R$ on a sliding window $W$ over a streaming graph $S$ as new edges enter and old edges expire due to window slide.
We follow the implicit window semantics, where  newly arriving edges are processed as they arrive (and new results appended to the output stream) while the removal of  expired edges occur at user-specified slide intervals.
We then turn our attention to simple path semantics (\S \ref{sec:simple}). 
The static version of the RPQ evaluation problem is NP-hard in its most general form \cite{mendelzon1995finding}, which has caused existing work to focus only on arbitrary path semantics.
Yet, it is proven to be tractable when restricted to certain classes of regular expressions or by imposing restrictions on the graph instances \cite{mendelzon1995finding, baganbg13}.
A recent analysis \cite{bonifati2017analytical, bonifati2019navigating} of real-world SPARQL logs shows that a large portion of RPQs posed by users does indeed fall into those tractable classes, motivating the design of efficient algorithms for streaming RPQ evaluation under simple path semantics.
Our proposed algorithm admits efficient solutions for streaming RPQs under simple path semantics in the absence of conflicts, a condition on the cyclic structure of graphs that enables efficient batch algorithms (precisely defined in \S \ref{sec:simple-insertonly}) \cite{mendelzon1995finding}.
Indeed, this algorithm has the same  amortized time complexity  as the proposed algorithm for arbitrary path semantics under the same condition.
The proposed algorithms 
incrementally maintain query answers as the window slides thus eliminating the computational overhead of the naive strategy of batch computation after each window movement.
Furthermore, they support negative tuples to accommodate applications where users might explicitly delete a previously inserted edge.
Albeit relatively rare, explicit deletions are a desired feature of real-world applications that process and query streaming graphs, and it is known to require special attention \cite{golabo05}.
We show that window management and explicit deletions can be handled 
in a uniform manner using the same machinery 
(\S \ref{sec:arbitrary-delete}). 
Finally, we empirically evaluate the performance of our proposed algorithms using a variety of real-world and synthetic streaming graphs on real-world RPQs that cover more than 99\% of all recursive queries abundantly found in massive Wikidata query logs \cite{bonifati2019navigating} (\S \ref{sec:experimens}).



\section{Preliminaries}
\label{sec:preliminaries}

\label{sec:definitions}

\begin{dft}[Graph]\label{def:graph}
A directed labeled graph is a quin\-tuple $G = (V, E, \Sigma, \psi, \phi)$ where $V$ is a set of vertices, 
$E$ is a set of edges, 
$\Sigma$ is a set of labels,
$\psi: E \rightarrow V \times V$ is an incidence function
and $\phi: E \rightarrow \Sigma$ is an edge labelling function. 
\end{dft}

\begin{dft}[Streaming Graph Tuple]\label{def:insertstreamtuple}
A streaming graph tuple (sgt) $t$ is a quadruple $(\tau, e,l, op)$ where $\tau$ is the event (application) timestamp of the tuple assigned by the data source, $e=(u,v)$ is the directed edge with source vertex $u$ and target vertex $v$, $l \in \Sigma$ is the label of the edge $e$ and $op$ is the type of the edge, i.e., insert ($+$) or delete ($-$) .
\end{dft}

\begin{dft}[Streaming Graph]\label{def:insertstream}
 
\sloppy A streaming graph $S$ is a constantly growing sequence of streaming graph tuples (sgts) $S= \langle t_1, t_2, \cdots, t_m \rangle$ in which each tuple $t_i$ arrives at a particular time $\tau_i$ ($\tau_i < \tau_j$ for $i<j$). 
\end{dft}

In this paper, we assume that sgts\footnote{We use ``sgt" and ``tuple'' interchangeably.
} are generated by a single source and arrive in source timestamp order $\tau_i$, which
 defines their ordering in the stream. We leave the problem of 
out-of-order delivery as future work.





\begin{dft}[Time-based Window]\label{def:win}

A time-based window $W$  over a streaming graph $S$ is defined by a time interval $(W^b,W^e]$ where $W^b$ and $W^e$ are the beginning and end times of window $W$ and $W^e - W^b = |W|$. The window contents $W(c)$ is the multiset of sgts where the timestamp $\tau_i$ of each sgt $t_i$ is in the window interval, i.e.,  $W(c) = \{ t_i \mid W_b < \tau_i \leq W_e  \}$.\footnote{We use $W$ interchangeably to refer to a window interval or its contents.}
\end{dft}

\begin{dft}[Time-based Sliding Window]\label{def:window}
A time-based sliding window $W$ with a slide interval $\beta$ is a time-based window that progresses every $\beta$ time units.
At any time point $\tau$, a time-based sliding window $W$ with a slide interval $\beta$ defines a time interval $(W^b,W^e]$ where $W^e = \lfloor \tau / \beta \rfloor \cdot \beta$ and $W^b = W^e - |W|$.
The contents of $W$ at time $\tau$ defines a snapshot graph $G_{W, \tau} = (V_{W, \tau}, E_{W, \tau}, \Sigma_{W, \tau}, \psi, \phi)$ where $E_{W, \tau}$ is the set of all edges that appear in sgts in $W$ and $V_{W, \tau}$ is the set of vertices that are endpoints of edges in $E_{W, \tau}$.
\end{dft}

Figure {\ref{fig:stream_graph}} shows an excerpt of a streaming graph $S$ at $t = 19$.
Figure {\ref{fig:data_graph}} shows the snapshot graph $G_{W,18}$ defined by window $W$ with $|W|=15$ over this graph $S$.


A time-based sliding window $W$ might progress either at every time unit, i.e. $\beta=1$ (\textit{eager evaluation}; resp. \textit{expiration}) or at $\beta>1$ intervals (\textit{lazy evaluation}; resp. \textit{expiration}) \cite{patroumpas2006window}.
Eager evaluation 
produces fresh results but windows can be expired lazily if queries do not produce premature expirations \cite{golabo05}.
We use eager evaluation ($\beta = 1$) but lazy expiration ($\beta > 1$) as it enables us to separate window maintenance from processing of incoming sgts (\S \ref{sec:arbitrary-insertonly}).  

\begin{dft}[Path and Path Label]\label{def:path}
\sloppy Given $u, v \in V$, a path $p$ from $u$ to $v$ in graph $G$ is a sequence of edges $ u \stackrel{p}{\rightarrow} v : \langle (v_0, l_0,v_1), \cdots, (v_{n-1},l_{n-1}, v_n) \rangle$ where $v_0 = u$ and $v_n = v$.
The label of a path $p$ is denoted by $\phi(p) = l_0l_1\cdots l_{n-1} \in \Sigma^*$.
\end{dft}

\begin{dft}[Regular Expression \& Regular Language]\label{def:reglanguage}

A regular expression $R$ over an alphabet $\Sigma$ is defined as $ R ::= \epsilon \mid a \mid \ R \circ R \mid R + R \mid R^*$ where (i) $\epsilon$ denotes the empty string, (ii) $a \in \Sigma$ denotes a character in the alphabet, (iii) $\circ$ denotes the concatenation operator, (iv) $+$ denotes the alternation operator, and (v) $*$ represents the Kleene star. 
We use $\neg$ to denote the negation of an expression, and  
$R^+$ to denote 1 or more repetitions of $R$.
A regular language $L(R)$ is the set of all strings that can be described by the regular expression $R$.
\end{dft}

\begin{dft}[Regular Path Query -- RPQ]
A Regular Path Query $Q_R$ asks for pairs of vertices $(u,v)$ that are connected by a path $p$ from $u$ to $v$ in graph $G$, where the path label $\phi(p)$ is a word in the regular language defined by the regular expression $R$ over the graph's edge labels $\Sigma$, i.e., $\phi(p) \in L(R)$. Answer to  query $Q_R$ over  $G$, $Q_R(G)$, is the set of all pairs of vertices that are connected by such paths.
\end{dft}



Sliding windows adhere to two alternative semantics: \textit{implicit} and \textit{explicit} {\cite{golab:2010fj}}.
Implicit windows add new results to query output as new sgts arrive and do not invalidate the previously reported results upon their expiry as the window moves.
In the absence of explicit edge deletions, the query results are monotonic.
Under this model, the result set of a streaming RPQ over a streaming graph $S$ and a sliding window $W$ at time $\tau$ contains all paths in all previous snapshot graphs $G_{W,\pi}$ where $0 < \pi \leq \tau$, i.e., $Q_R(S, W, \tau) = \bigcup_{0 < \pi \leq \tau} Q_R(G_{W,\pi})$.
Alternatively, explicit windows remove previously reported results involving tuples (i.e., sgts) that have expired from the window; hence, persistent queries with explicit windows are akin to incremental view maintenance.
Under this model, the result set of a streaming RPQ over a streaming graph $S$ and a sliding window $W$ at time $\tau$ contains only the paths in the snapshot $G_{W,\tau}$ of the streaming graph, i.e., $Q_R(S, W, \tau) = Q_R(G_{W,\tau})$.
Explicit windows, by definition, produce non-monotonic results as previous results are negated when the window moves \cite{golab:2010fj}.
We employ the implicit window model in this paper as it enables us to preserve the monotonicity of query results and produce an append-only stream of query results (in the absence of explicit deletions).

\begin{dft}[Streaming RPQ]
\label{def:streamingRPQ}
A streaming RPQ is 
defined over a streaming graph $S$ and a sliding window $W$.
A pair of vertices $(u,v)$ is an answer for a streaming RPQ, $Q_R$,  at time $\tau$ if there exists a path $p$ between $u$ and $v$ in $G_{W,\tau}$, i.e.,  all edges in $p$ are in window $W$.
We define the timestamp $p.ts$ of a path $p$ as the minimum timestamp among all edges of $p$.
Under the implicit window model, the result set of a streaming RPQ $Q_R$ over a streaming graph $S$ and a sliding window $W$ is an append-only stream of pairs of vertices
$(u,v)$ where there exists a path $p$ between $u$ and $v$ with label $\phi(p) \in L(R)$ and all the edges in $p$ are at most one window length, i.e., $|W|$ time units, apart. Formally:
\begin{align*}
Q_R(S, W, \tau) = \{ (u,v) \mid &  \exists p: u \stackrel{p}{\rightarrow} v  \land  \phi(p) \in L(R) \land \\ &  \max_{e \in p}{(e.ts)} < p.ts + |W| \leq \tau  \}  
\end{align*}

\end{dft}


\begin{dft}[Deterministic Finite Automaton]
Given a regular expression $R$, $A = (S, \Sigma, \delta, s_0, F)$ is a Deterministic Finite Automaton (DFA) for $L(R)$ where $S$ is the set of states, 
$\Sigma$ is the input alphabet, 
$\delta \colon S \times \Sigma \rightarrow S$
is the state transition function, $s_0 \in S$ is the start state and $F \subseteq S$ is the set of final states. 
$\delta^*$ is the extended transition function defined as:
\[ \delta^*(s, w \circ a) = {\delta(\delta^*(s,w),a)} \]
\noindent where $s \in S$, $a \in \Sigma$, $w \in \Sigma^*$, and $\delta^*(s,\epsilon)=s$ for the empty string $\epsilon$. 
We say that a word $w$ is in the language accepted by $A$ if $\delta^*(w,s_0) = s_f $ for some $ s_f \in F$.
\end{dft}

\begin{dft}[Product Graph]
\label{def:product_graph}
Given a graph $G = (V,E,\Sigma,\phi)$ and a DFA $A = (S, \Sigma, \delta, s_0, F)$, we define the product graph $P_{G,A} = (V_P, E_P, \Sigma, \phi_{P})$ where $V_P = V \times S$, $E_P \subseteq V_P \times V_P$, and $((u,s),(v,t))$ is in $E_P$ iff $(u,v) \in E$ and $\delta(s, \phi(u,v))=t$. 
\end{dft}

Figure {\ref{fig:product_graph}} shows the product graph of  $G_{W,18}$ (Figure {\ref{fig:data_graph}}) and the DFA $A$ of the query $Q_1$ (Figure {\ref{fig:query_graph}}).


For a given RPQ, $Q_R$, we first use Thompson's construction algorithm~{\cite{thompson1968programming}} to create a NDFA that recognizes the language $L(R)$, then create the equivalent minimal DFA, $A$, using Hopcroft's algorithm~{\cite{hopcroft1971n}}.
In the rest of the paper, we use $A$ and the product graph $P_{G,A}$ to describe the proposed algorithms for RPQ evaluation in the streaming graph model. 



\section{RPQ with Arbitrary Semantics}
\label{sec:arbitrary}

In this section, we study the problem of  RPQ evaluation over sliding windows of streaming graphs under arbitrary path semantics, that is, finding pairs of vertices $u,v \in V$ where (i) there exists a (not necessarily simple) path $p$ between $u$ and $v$ with a label $\phi(p)$ in the language $L(R)$, and (ii) timestamps of all edges in  path $p$ are in the range of  window $W$.
We first consider append-only streams where the query results are monotonic (under \textit{implicit window} model) such that existing results do not expire from the result set when input tuples expire from the window \cite{golab:2010fj}.
Then, we show how the proposed algorithms are extended to support negative tuples to handle explicit edge deletions.

\textbf{Batch Algorithm}:
RPQs can be evaluated in polynomial time under arbitrary path semantics \cite{mendelzon1995finding}. 
Given a product graph $P_{G,A}$, there is a path $p$ in $G$ from $x$ to $y$ with label $w$ that is in $L(R)$ if and only if there is a path in $P_{G,A}$ from $(x,s_0)$ to $(y,s_f)$, where $s_f \in F$. 
The batch RPQ evaluation algorithm  under arbitrary path semantics traverses the product graph $P_{G,A}$ by simultaneously traversing graph $G$ and the automaton $A$. 
The time complexity of the batch algorithm is {$\mathcal{O}$}$(n\cdot m\cdot k^2)$ under the assumption that there are more edges than isolated vertices in  $G$.

\subsection{RPQ over Append-Only Streams}
\label{sec:arbitrary-insertonly}
We first present an incremental algorithm for Regular Arbitrary Path Query (RAPQ) evaluation over append-only streams. As noted above, using implicit window semantics, RAPQs are monotonic,
i.e., $Q_R(S, W, \tau) \subseteq Q_R(S, W,\tau+\epsilon)$ for all $\tau, \epsilon \geq 0$. 
Algorithm \textbf{\ref{alg:arbitrary}} consumes a sequence of append-only tuples (i.e., \texttt{op} is $+$), and simultaneously traverses the product graph of the snapshot graph $G_{W,\tau}$ of the window $W$ over a graph stream $S$ and the automaton $A_{\tau}$ of $Q_R$ for each tuple $t_{\tau}$, and it produces an append-only stream of results for $Q_R(S, W, \tau)$. 
As in the case of the batch algorithm, such traversal of $G_{W,\tau}$ guided with the automaton $A$ emulates a traversal of the product graph $P_{G,A}$.

\begin{algorithm}
\small
\SetAlgoRefName{RAPQ}
	\SetKwData{Left}{left}\SetKwData{This}{this}\SetKwData{Up}{up}
	\SetKwInOut{Input}{input}\SetKwInOut{Output}{output}
	\Input{ 
	        Incoming tuple  $t_{\tau} = (\tau,e_{\tau},l,op), e_{\tau}=(u,v)$, \\
	}
	$G_{W,\tau} \leftarrow G_{W,\tau-1}$ (op) $e_{\tau}$ \label{line:insertrapq_addwindow} \\
    \ref{alg:insertrapq_expiry}($G_{W, \tau}, T_x, \tau$) $\forall T_x \in \Delta$ \label{line:insertrapq_callexpiry} \tcp{on user-defined slide intervals}
    set of results $R \leftarrow \emptyset $ 
    
    
    \ForEach{$T_x \in \Delta$}{ \label{line:insertrapq_foreachtree}
        \ForEach{$s,t \in S$ where $t=\delta(s,l)$}{ \label{line:insertrapq_foreachtransition}
            \If{$(u,s) \in T_x \wedge (u,s).ts > \tau - |W| $}{ \label{line:insertrapq_windowedge}
                \If{$(v,t) \not\in T_x \vee (v,t).ts < min((u,s).ts, \tau)$}{ \label{line:insertrapq_updateexpiry}
                    $R \leftarrow R +$\ref{alg:insertrapq_insert_edge}($T_x$, $(u,s)$, $(v,t)$, $e=(u,v)$) \label{line:insertrapq_callinsert}
                }
            }
        }
    }
	
	$Q_R(S, W, \tau) \leftarrow Q_R(S, W, \tau-1) + R$
	\caption{}\label{alg:arbitrary}
\end{algorithm} 

\begin{algorithm}
\small
\SetAlgoRefName{Insert}
	\SetKwData{Left}{left}\SetKwData{This}{this}\SetKwData{Up}{up}
	\SetKwInOut{Input}{input}\SetKwInOut{Output}{output}
	\Input{Spanning Tree $T_x$ rooted at $(x,s_0)$, \\
	        parent node $(u,s)$, child node $(v,t)$, Edge $e=(u,v)$, \\
	}
	\Output{
	The set of results $R$
	}
	$R \leftarrow \emptyset$ \\
	$(v,t).pt = (u,s)$ \\ \label{line:insertrapq_updateparent}
	$(v,t).ts = min(e.ts, (u,s).ts)$ \label{line:insertrapq_updatetimestamp}

	\If{$(v,t) \not\in T_x$ \label{line:insertrapq_invariant2} }{ 
    	\If{$t \in F$}{
            $R \leftarrow R + (x,v)$ \label{line:addresult}
        }
        \ForEach{edge $(v,w) \in W_{G, \tau} $ s.t. $\delta(t, \phi(v,w))=q$}{
            \If{$(w,q) \not\in T_x \vee (w,q).ts < min((v,t).ts, (v,w).ts)$}{ \label{line:insertrapq_exploreedge}
                $R \leftarrow R +$\ref{alg:insertrapq_insert_edge}($T_x$, $(v,t)$, $(w,q)$, $e=(v,w)$) \label{line:insertrapq_callinsert2}
            }
    	}
	}

	\Return $R$

\caption{}\label{alg:insertrapq_insert_edge}
\end{algorithm}

\begin{dft}[$\Delta$ Tree Index]\label{def:delta}
Given an automaton $A$ for a query $Q_R$ and a snapshot $G_{W,\tau}$ of a streaming graph $S$ at time $\tau$, $\Delta$ is a collection of spanning trees where each tree $T_x$ is rooted at a vertex  $x \in G_{W,\tau}$ for which there is a corresponding node in the product graph of $A$ and $G_{W,\tau}$ with the start state $s_0$, i.e., $\Delta = \{ T_x \mid x \in G_{W,\tau} \wedge (x,s_0) \in V_{P_{G,A}} \}$. 

\end{dft}

In the remainder, we use the term ``vertex'' to denote endpoints of sgts, and the term ``node'' to denote vertex-state pairs in spanning trees.

A node $(u,s) \in T_x$ at time $\tau$ indicates that there is a path $p$ in $G_{W,\tau}$ from $x$ to $u$ with label $\phi(p)$ and timestamp $p.ts$ such that $\delta^*(s_0,\phi(p))=s$ and $(\tau - |W|) < p.ts \leq \tau$, i.e., word $\phi(p) \in \Sigma^*$ takes the automaton $A_{\tau}$ from the initial state $s_0$ to a state $s$ and the timestamp of the path is in the window range.
Each node $(u,s)$ in a tree $T_x$ maintains a pointer $(u,s).pt$ to its parent in $T_x$.
Additionally, the timestamp $(u,s).ts$ is the minimum timestamp among all edges in the path from $(x,s_0)$ to $(u,s)$ in the spanning tree $T_x$, following Definition \ref{def:streamingRPQ}.

The proposed algorithm  continuously updates  $G_{W,\tau}$  upon arrival of new edges and expiry of old edges.
In addition to $G_{W,\tau}$, it maintains a tree index ($\Delta$) to support efficient incremental RPQ evaluation that enables efficient RPQ evaluation on sliding windows over streaming graphs. 

\begin{figure}
    \centering
    \subfigure[t=18]{
        \includegraphics[height=2in]{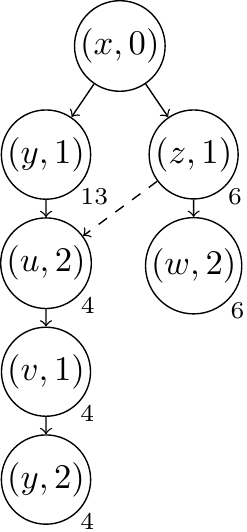}
        \label{fig:spanning_tree_before_insert}
    }
    \qquad
    \subfigure[t=19]{
        \includegraphics[height=2in]{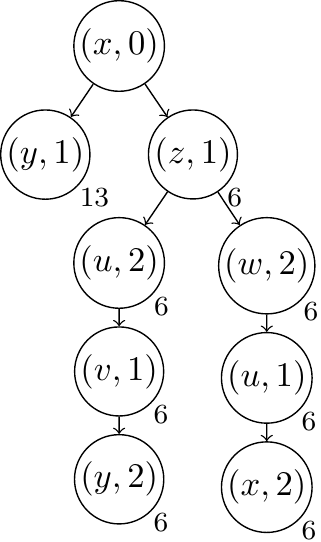}
        \label{fig:spanning_tree_after_insert}
    }
    \caption{A spanning tree $T_x \in \Delta$ for the example given in Figure \ref{fig:running_example} rooted at $(x,0)$ (a) before and (b) after the edge $e=(w,u)$ with label $follows$ at $t=19$ is consumed. The timestamp of each node given at the corner.}
    \label{fig:spanning_tree}
\end{figure}

\begin{example}
Figure \ref{fig:spanning_tree_before_insert} illustrates a spanning tree $T_x \in \Delta$ for the streaming graph $S$ and the RPQ $Q_1$ given in Figure \ref{fig:running_example} at time $t=18$.
The tree
in Figure \ref{fig:spanning_tree_before_insert} is constructed through a traversal of the product graph starting from  node $(x,0)$, visiting nodes $(y,1)$, $(u,2)$, $(v,1)$ and $(y,2)$, forming the path from the root to the node $(y,2)$ in Figure \ref{fig:spanning_tree_before_insert}.
Similar to the batch algorithm, this corresponds to the traversal of the path $\langle x,y,u,v,y \rangle$ in the snapshot of the streaming graph (Figure \ref{fig:data_graph}) with label $\langle \textit{follows, mentions$, $follows, mentions} \rangle$ taking the automaton from state $0$ to $2$ through the path $\langle 0,1,2,1,2 \rangle$  in the corresponding automaton (Figure \ref{fig:query_graph}).
The timestamp of the node $(y,2) \in T_x$ at $t = 18$ is $4$ as the edge with the minimum timestamp on the path from the root is $(y,\textit{mentions},u)$ with $\tau=4$.
\end{example}

\begin{lmm}\label{lem:arbitraryinvariants}
The proposed Algorithm \textbf{\ref{alg:arbitrary}} maintains the following two  invariants of the  $\Delta$ tree index: 

\begin{enumerate}
    \item A node $(u,s)$ with timestamp $ts$ is in $T_x$ if there exists a path $p$ in $G_{W,\tau}$ from $x$ to $u$ with label $\phi(p)$ and timestamp $(u,s).ts$ such that $s=\delta^*(s_0, \phi(p))$ and $(u,s).ts = p.ts \in (\tau - |W|, \tau ]$, i.e., there exists a path $p$ in $G_{\tau}$ from $x$ to $u$ with label $\phi(p)$ such that $\phi(p)$ is a prefix of a word in $L(R)$ and all edges are in the window $W$.
    \item At any given time $\tau$, a node $(u,s)$ appears in a spanning tree $T_x$ at most once with a timestamp in the range $(\tau - W, \tau ]$.
\end{enumerate}
\end{lmm}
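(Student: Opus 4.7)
\begin{proofsk}
The plan is to prove both invariants simultaneously by induction on the sequence of events processed by Algorithm \ref{alg:arbitrary}, where each event is either (i) the arrival of an insert sgt $t_\tau = (\tau, (u,v), l, +)$, or (ii) a window slide that triggers the Expiry routine on every tree $T_x \in \Delta$. The base case is trivial: before any sgt is consumed, $\Delta$ is empty, and both invariants hold vacuously. For the inductive step, assume that at time $\tau-1$ the tree index $\Delta$ satisfies both invariants with respect to $G_{W,\tau-1}$.

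For the insertion case, after updating the window on line \ref{line:insertrapq_addwindow}, I would argue in two directions. Soundness ($\Rightarrow$ of invariant 1): every call to \ref{alg:insertrapq_insert_edge} on lines \ref{line:insertrapq_callinsert} and \ref{line:insertrapq_callinsert2} is guarded by a check that the parent $(u,s)$ already lies in $T_x$ with a non-expired timestamp and that $\delta(s, l) = t$; thus the inductive hypothesis yields a path $p'$ from $x$ to $u$ with $\delta^*(s_0, \phi(p')) = s$ within the window, and appending edge $(u,v)$ (or $(v,w)$ in the recursive call) extends $p'$ into a valid path whose minimum edge timestamp is exactly $\min(p'.ts, e.ts)$, which is what lines \ref{line:insertrapq_updateparent}--\ref{line:insertrapq_updatetimestamp} record. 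Completeness ($\Leftarrow$): I will argue by induction on path length that any path $p$ in $G_{W,\tau}$ witnessing the premise necessarily uses some prefix that is either entirely in $G_{W,\tau-1}$ (covered by the inductive hypothesis at $\tau-1$) or includes the new edge, and in the latter case the outer loop of lines \ref{line:insertrapq_foreachtree}--\ref{line:insertrapq_callinsert} considers every tree and every transition triggered by label $l$, so the relevant update is invoked. Crucially, the guard on line \ref{line:insertrapq_exploreedge} ensures that if a newly discovered path yields a strictly larger minimum timestamp for some descendant $(w,q)$, the recursive Insert call propagates the improvement, so the ``best'' timestamp is stored.

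For invariant 2 (uniqueness), the key observation is that every write to $T_x$ is conditioned on $(v,t) \not\in T_x$ or $(v,t).ts$ being strictly smaller than the new candidate value (lines \ref{line:insertrapq_updateexpiry} and \ref{line:insertrapq_exploreedge}). Because Insert overwrites the parent pointer and timestamp of $(v,t)$ rather than creating a duplicate, a given $(v,t)$ is present at most once, and recursion either strictly increases its timestamp or terminates without writing. This also underwrites termination of the recursion inside cycles of the product graph: since timestamps are bounded above by $\tau$, the guard eventually fails along any recursive chain. For the expiration case, invocation of Expiry (line \ref{line:insertrapq_callexpiry}) on a user-defined slide preserves invariant 1 by removing exactly those nodes whose stored minimum timestamp has fallen out of $(\tau - |W|, \tau]$, and invariant 2 is unaffected by deletions.

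I expect the main technical obstacle to be the completeness direction in the presence of cycles in the product graph, particularly showing that when the new edge appears in the middle of a long path, the cascading recursive Insert calls actually discover every vertex--state pair reachable via the improved timestamp. I would handle this by formulating a secondary induction on the distance (in the product graph restricted to the current window) from the newly inserted edge, using invariant 2 to guarantee that the Insert recursion visits each affected node at most once per improvement and therefore both progresses and terminates.
\end{proofsk}
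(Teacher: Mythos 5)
Your treatment of the insertion case follows essentially the same route as the paper's proof: induction on path length (your ``distance from the newly inserted edge''), with the implicit case split on whether the last edge of a witnessing path was already in the window when the parent node was created (handled by the recursive exploration inside \ref{alg:insertrapq_insert_edge}, Line \ref{line:insertrapq_callinsert2}) or arrives afterwards (handled by the outer loop of Algorithm \ref{alg:arbitrary}, Line \ref{line:insertrapq_callinsert}); and invariant 2 rests, as in the paper, on the membership guard of Line \ref{line:insertrapq_invariant2}. That part is sound.

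The genuine gap is in the expiry case. You assert that Expiry ``preserves invariant 1 by removing exactly those nodes whose stored minimum timestamp has fallen out of $(\tau-|W|,\tau]$,'' but that operation alone would \emph{break} invariant 1: the stored value $(v,t).ts$ is the minimum over one particular tree path, and a node whose tree path has expired may still be reachable from $(x,s_0)$ via an alternative path made entirely of non-expired window edges, in which case invariant 1 requires it to remain in $T_x$ with a refreshed timestamp. The essential content of the paper's argument for Algorithm \ref{alg:insertrapq_expiry} is precisely the reconnection step you omit: after pruning the candidate set (Line \ref{line:insertrapq_candidatesubgraph}), it scans the incoming window edges of each candidate and re-invokes \ref{alg:insertrapq_insert_edge} from any still-valid parent (Line \ref{line:insertrapq_connected}), so a node is permanently removed only when no valid path to it survives; your ``invariant 2 is unaffected by deletions'' must then also be rechecked against these re-insertions (it holds, again via the guard of Line \ref{line:insertrapq_invariant2}). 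A secondary overstatement: when $(v,t)$ is already in $T_x$ and only its timestamp is refreshed, \ref{alg:insertrapq_insert_edge} does not recurse, so the improved timestamp is \emph{not} propagated to descendants as you claim; the invariant survives because each node's timestamp need only witness some valid path, not the freshest one, but your argument should not rely on downward propagation that the algorithm does not perform.
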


\begin{proof}
First, we show that Algorithm \textbf{\ref{alg:insertrapq_expiry}} maintains the two invariants of the $\Delta$ tree index.
The second invariant is preserved as Algorithm \textbf{\ref{alg:insertrapq_expiry}} does not add any node to a spanning tree $T_x \in \Delta$.
For each spanning tree $T_x \in \Delta$, Line \ref{line:insertrapq_candidatenodes} of the algorithm identifies the set of nodes that are potentially expired at time $\tau$, $P = \{ (v,t) \in T_x \mid (v,t).ts \leq \tau - |W| \}$. 
Initially, all expired nodes are removed from the spanning tree $T_x$ (Line \ref{line:insertrapq_candidatesubgraph}).
Algorithm \textbf{\ref{alg:insertrapq_insert_edge}} is invoked for each expired node $(v,t) \in P$ if there exists a valid edge in the window $G_{W, \tau}$ from another valid node in $T_X$ (Line \ref{line:insertrapq_connected}).
Finally, nodes that are reconnected to the spanning tree $T_x$ by Algorithm \textbf{\ref{alg:insertrapq_insert_edge}} are removed from $P$ as there exists an alternative path from the root through $(u,s)$.
As a result, Algorithm \textbf{\ref{alg:insertrapq_expiry}} removes a node $(v,t)$ from the spanning tree $T_x$ if there does not exist any path $p$  in $G_{W,\tau}$ from $x$ to $u$ with a label $l$ such that $s=\delta^*(s_0,l)$ and $p.ts > \tau - |W|$, preserving the first invariant.

It is easy to see that the second invariant is preserved after each call to Algorithm \textbf{\ref{alg:arbitrary}} given that Algorithm \textbf{\ref{alg:insertrapq_expiry}} preserves both invariants. 
The second invariant is preserved as 
Line \ref{line:insertrapq_invariant2} of Algorithm \textbf{\ref{alg:insertrapq_insert_edge}} adds the node $(v,t)$ to a spanning tree $T_x$ only if it has not been previously inserted. 

We show that Algorithm \textbf{\ref{alg:arbitrary}} preserves the first invariant by induction on the length of the path. 
For the base case $n = 1$, consider that $t_{\tau} = (\tau, e, l, +), e=(u,v)$ arrives in the window $W$ at time $\tau$.
Line \ref{line:insertrapq_foreachtransition} in  Algorithm \textbf{\ref{alg:arbitrary}} identifies each state $t$ where there is a transition from the initial state $s_0$ with label $l$, i.e., $\delta(s_0,l)=t$. 
The path from $(u,s_0)$ to $(v,t)$ is added to $T_x$ with $(v,t).ts = \tau$.
For the non-base case, consider a node $v \in G_{W,\tau}$ where there exists a path $p$ of length $n$ from $x$ where $t=\delta^*(s_0,\phi(p))$ and $p.ts > \tau - |W|$. 
Let $(u,s)$ be the predecessor of $(v,t)$ in the path, that is edge $(u,v)$ is in $G_{W,\tau}$ with label $l$ and $\delta(s,l)=t$. 
By the inductive hypothesis, the node $(u,s)$ is in  $T_x$  as there exists a path $q$ of length $n-1$ from $x$ to $u$ in $G_{W,\tau}$ where $s=\delta^*(s_0, \phi(q))$ and $q.ts > \tau - |W|$. 
If the edge $e=(u,v) \in G_{W,\tau}$ is already in the window $W$ ($\tau - |W| < e.ts < \tau$) when the node $(u,s)$ is inserted into $T_x$ , then  the proposed algorithm invokes Algorithm \ref{alg:insertrapq_insert_edge} with node $(u,s)$ as parent and node $(v,t)$ as child (Line \ref{line:insertrapq_callinsert}) and its adds $(v,t)$ into $T_x$ with timestamp $(v,t).ts = min(e.ts, (u,s).ts)$ (Line \ref{line:insertrapq_updatetimestamp}). 
If the edge $e=(u,v)$ is processed by the proposed algorithm after the node $(u,s)$ is inserted in $T_x$ ($e.ts > (u,s).ts$), then Line \ref{line:insertrapq_callinsert2} in Algorithm \textbf{\ref{alg:insertrapq_insert_edge}} guarantees that Algorithm \textbf{\ref{alg:insertrapq_insert_edge}} is invoked with the node $(v,t)$. 
Lines \ref{line:insertrapq_updateparent} and \ref{line:insertrapq_updatetimestamp} in Algorithm \textbf{\ref{alg:insertrapq_insert_edge}} adds the node $(v,t)$ to $T_x$, and properly updates its parent pointer to $(u,s)$ and its timestamp $(v,t).ts = min(e.ts, (u,s).ts)$.
The first invariant is preserved in either case as  $\tau - |W| < p.ts = (v,t).ts \leq \tau$.
%
Therefore we conclude that Algorithm \textbf{\ref{alg:arbitrary}} also preserves the first invariant.
\end{proof}

The first invariant allows us to trace all reachable nodes from a root node $(x,s_0)$ whereas 
the second invariant prevents Algorithm \textbf{\ref{alg:arbitrary}} from visiting the same vertex in the same state more than once in the same tree. 
Consider the example in Figure \ref{fig:spanning_tree_before_insert}:
node $(u,2)$ is not added as a child of the node $(x,1)$ after traversing  edge $(x,u) \in S$ with label $\textit{mentions}$ since $(u,2)$ is already reachable from $(x,0)$.

Algorithm \textbf{\ref{alg:insertrapq_expiry}} is invoked at pre-defined slide intervals to remove expired nodes from $\Delta$.
For each $T_x \in \Delta$, it identifies the set of candidate nodes whose timestamps are not in  $(\tau - |W|, \tau ]$ (Line \ref{line:insertrapq_candidatenodes}) and temporarily removes those from $T_x$ (Line \ref{line:insertrapq_candidatesubgraph}).
For each candidate $(v,t)$, Algorithm \textbf{\ref{alg:insertrapq_insert_edge}} finds an incoming edge from another valid node in $T_x$ (Line \ref{line:insertrapq_connected}) and it reconnects the subtree rooted at $(v,t)$  to $T_x$. 
Nodes with no valid incoming edges are permanently removed from $T_x$.
Algorithm \textbf{\ref{alg:insertrapq_expiry}} might traverse the entire snapshot graph $G_{W,\tau}$ in the worst case.
This can be used to undo previously reported results if explicit window semantics is required (Line \ref{line:insertrapq_removeresult}), yet, we only do so to process explicit deletions as described in \S \ref{sec:arbitrary-delete}.

\begin{example}
Consider the example provided in Figure \ref{fig:spanning_tree_after_insert} and assume that window size is $15$ time units. 
Upon arrival of  edge $(w,u)$ with label $follows$ at $t=19$, nodes $(u,1)$ and $(x,2)$ are added to $T_x$ as descendants of $(w,2)$.
Also, paths leading to nodes $(u,2)$, $(v,1)$ and $(y,2)$ are  expired as their timestamp is $4$ (due to the edge $(y,u)$ with a timestamp $4$). 
Algorithm \textbf{\ref{alg:insertrapq_expiry}} searches incoming edges of  vertex $u$ in $G_{W, \tau}$ and identifies that there exists a valid edge $(z,u)$ with label $mentions$ and timestamp $14$.
As a result,  node $(u,2)$ and its subtree is reconnected to node $(z,1)$.
\end{example}


\begin{thm}\label{thm:arb_correctness}
Algorithm \textbf{\ref{alg:arbitrary}} is correct and complete.
\end{thm}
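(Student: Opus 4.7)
My plan is to reduce correctness and completeness of Algorithm \ref{alg:arbitrary} directly to Lemma \ref{lem:arbitraryinvariants}, since its first invariant already characterizes exactly when a node $(v, t)$ sits in $T_x$, and the only place emissions occur is at line \ref{line:addresult} of Algorithm \ref{alg:insertrapq_insert_edge} at the moment such a node is freshly added with $t \in F$. So the proof amounts to translating the invariant from ``node is in tree'' to ``pair was emitted.''

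\textbf{Correctness.} First I would show every output is a true answer: suppose the algorithm emits $(x,v)$ at time $\tau$. The emission happens because the current invocation of Insert has just added a node $(v,t)$ with $t \in F$ to $T_x$. By invariant~1 of Lemma \ref{lem:arbitraryinvariants}, at that moment there is a path $p$ in $G_{W,\tau}$ from $x$ to $v$ with $\delta^*(s_0,\phi(p)) = t$ and timestamp $p.ts \in (\tau - |W|, \tau]$. Since $t \in F$ we have $\phi(p) \in L(R)$, and the timestamp bound is exactly the window condition $\max_{e \in p}(e.ts) < p.ts + |W| \leq \tau$ from Definition \ref{def:streamingRPQ}, so $(x,v) \in Q_R(S,W,\tau)$.

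\textbf{Completeness.} Next I would show every valid pair is eventually output. Fix $(x,v) \in Q_R(S,W,\tau)$ witnessed by a path $p$ with $\phi(p) \in L(R)$, and let $t = \delta^*(s_0,\phi(p)) \in F$. By invariant~1, some node $(v,t)$ with $t \in F$ is in $T_x$ at time $\tau$; by invariant~2 it has been there since a unique earliest insertion time $\tau'$. I would now argue that Algorithm \ref{alg:insertrapq_insert_edge} emits $(x,v)$ at that moment $\tau'$. Inspection of the control flow shows that the only guard between the fresh-insertion test at line \ref{line:insertrapq_invariant2} and the emission is $t \in F$, so any fresh addition of $(v,t)$ with $t \in F$ produces an emission. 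It remains to show that $(v,t)$ is actually reached by Algorithm \ref{alg:arbitrary}; I would do this by induction on the length of an earliest-time witness path, with the base case (length one) handled by lines \ref{line:insertrapq_foreachtransition}--\ref{line:insertrapq_callinsert} of Algorithm \ref{alg:arbitrary} when the single edge is consumed, and the inductive step handled by the recursive call at line \ref{line:insertrapq_callinsert2} of Algorithm \ref{alg:insertrapq_insert_edge} together with the inductive hypothesis applied to the prefix.

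\textbf{Hardest step.} The delicate part is the completeness direction, specifically pinning down the right notion of ``first insertion'' when the window shifts. Algorithm \ref{alg:insertrapq_expiry} may remove and later reconnect nodes, so the unique insertion time of a qualifying $(v,t)$ could either be an edge-arrival event or a reconnection event. I would handle both uniformly by choosing, among all paths witnessing $(x,v) \in Q_R(S,W,\tau)$, the one whose maximum edge timestamp is minimized, and arguing that Insert is invoked on its terminal node at that maximum timestamp. A second subtlety is that lines \ref{line:insertrapq_updateparent}--\ref{line:insertrapq_updatetimestamp} of Insert rewrite the parent pointer and timestamp before the membership guard, so the guard at line \ref{line:insertrapq_invariant2} must be read as ``was not in $T_x$ prior to this invocation''; this reading is consistent with the refresh semantics discussed around Lemma \ref{lem:arbitraryinvariants} and is what lets the no-double-report property in the completeness argument go through.
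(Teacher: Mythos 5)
Your overall strategy --- reduce both directions to Lemma \ref{lem:arbitraryinvariants} plus the observation that emission happens exactly at the fresh insertion of a node $(v,t)$ with $t \in F$ --- is the same as the paper's, and your completeness direction (including the induction on the length of a witness path and the care about expiry and reconnection) matches the paper's ``If'' direction, which the paper actually dispatches more tersely by citing invariant~1. The one genuine gap is in your soundness paragraph: you justify the existence of a witnessing path by appealing to invariant~1 of Lemma \ref{lem:arbitraryinvariants}, but that invariant, as stated, is a one-directional implication --- a node $(u,s)$ is in $T_x$ \emph{if} a suitable path exists. Soundness needs the converse: a node is added to $T_x$ \emph{only if} such a path exists in $G_{W,\tau}$ with timestamp in $(\tau - |W|, \tau]$. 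The paper does not obtain this from the lemma; its ``Only If'' direction proves it separately, by induction on the length of the traversed path, tracing each invocation of the insertion routine (from Line \ref{line:insertrapq_callinsert} of Algorithm \ref{alg:arbitrary} or Line \ref{line:insertrapq_callinsert2} of Algorithm \ref{alg:insertrapq_insert_edge}) back to an actual edge of the snapshot graph with an in-window timestamp.

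The gap is repairable with machinery you already have: the induction you sketch for completeness is essentially the induction needed here, run in the other direction (from ``Insert was invoked with child $(v,t)$'' to ``there is an edge $(u,v) \in G_{W,\tau}$ and a node $(u,s)$ whose membership is itself witnessed by a shorter path''). You should also cover the reconnection case, since Algorithm \ref{alg:insertrapq_expiry} invokes Insert as well (Line \ref{line:insertrapq_connected}); the guard there ensures the reconnecting edge is a genuine in-window edge from a valid node, so the same induction applies. As written, however, the soundness step cites the lemma for a direction it does not state, and that step would not stand without the additional induction.
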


\begin{proof}
Algorithm \textbf{\ref{alg:arbitrary}} terminates as Line \ref{line:insertrapq_invariant2} ensures that no node is visited more than once in any spanning tree in $\Delta$.

\textbf{If:} If direction follows trivially from the first invariant of spanning trees.
Lemma \ref{lem:arbitraryinvariants} guarantees that node $(u,s)$ is inserted into the spanning tree $T_x$ if there exists a path in the snapshot graph $G_{W,\tau}$ of the window $W$ at time $\tau$ from $x$ to $u$ satisfying $R$. 
Line \ref{line:addresult} in Algorithm \textbf{\ref{alg:insertrapq_insert_edge}} adds the pair $(x,u)$ to the set of results $R$ if the target state is an accepting state, $s \in F$.

\textbf{Only If:} 
If the algorithm adds $(x,u)$ to $R$, then it must traverse a path $p$ from $x$ to $u$ in $G_{W,\tau}$ where $s_f=\delta^*(s_0,\phi(p)), s_f \in F$ and $ p.ts \in (\tau - |W|, \tau ]$. 
Let $n$ be the length of such path $p$.
For any $(x,u)$ that is added to $R$, Algorithm \textbf{\ref{alg:insertrapq_insert_edge}} must have been invoked with the node $(u,s_f)$ as the child node for some $s_f \in F$ (Line \ref{line:insertrapq_callinsert} in \textbf{\ref{alg:arbitrary}} or Line \ref{line:insertrapq_callinsert2} in \textbf{\ref{alg:insertrapq_insert_edge}}). 
Therefore, the proof proceeds by showing that  node $(u,s_f)$ with timestamp $(u,s_f).ts \in (\tau - |W|, \tau ]$ for some $s_f \in F$ is added to the spanning tree $T_x$ only if there exists a path $p$ of length $n$ with the same timestamp in $G_{W,\tau}$ from $x$ to $u$ satisfying $R$.
For the base case of $n = 1$, assume there exists a tuple $t_{\tau}=(\tau, e, l, +)$, $e=(x,u)$ where $\delta(s_0,l)=s_f$ for some $s_f \in F$. 
\textbf{Algorithm \ref{alg:arbitrary}} (Line \ref{line:insertrapq_callinsert}) invokes Algorithm \textbf{\ref{alg:insertrapq_insert_edge}} with parameters $(x,s_0)$ as the parent node and $(u,s_f)$ as the child node, then $(x,u)$ with timestamp $\tau$ is added to the result set (Line \ref{line:addresult}).
Let's assume that there exists a path $q$ of length $n-1$ in $G_{W,\tau}$ from  $x$ to $v$ where $t=\delta^*(s_0,\phi(p))$ and there exists a node $(v,t)$ in $T_x$ where $(v,t).ts = q.ts \in (\tau - |W|, \tau ]$.
For the node $(u,s)$ to be added to the spanning tree $T_x$ with timestamp $(u,s).ts \in (\tau - |W|, \tau ]$, Algorithm \textbf{\ref{alg:insertrapq_insert_edge}} must have been invoked with $(u,s)$ by Line \ref{line:insertrapq_callinsert} of Algorithm \textbf{\ref{alg:arbitrary}} or Line \ref{line:insertrapq_callinsert2} of Algorithm \textbf{\ref{alg:insertrapq_insert_edge}}.
In either case, there must be  an edge $e=(v,u) \in G_{W,\tau}$ where  $s=\delta(t,\phi(u,v))$, and $e.ts \in (\tau - |W|, \tau ]$.  Therefore, this implies that there exists a path of length $n$ in $G_{W,\tau}$ from  $x$ to $u$, thus concluding the proof.
\end{proof}


\begin{algorithm}
\small
\SetAlgoRefName{ExpiryRAPQ}
	\SetKwData{Left}{left}\SetKwData{This}{this}\SetKwData{Up}{up}
	\SetKwInOut{Input}{input}\SetKwInOut{Output}{output}
	\Input{Window $G_{W,\tau}$, timestamp $\tau$,  Spanning tree $T_x$
	}
	\Output{
	The set of invalidated results $R_I$
	}
	$R_I \leftarrow \emptyset$ \\
    set $P = \{ (v,t) \in T_x \mid (v,t).ts \leq \tau - |W| \}$ \tcp{potentially expired nodes} \label{line:insertrapq_candidatenodes} 
    $T_x \leftarrow T_x \setminus P$ \tcp{prune $T_x$}
    \label{line:insertrapq_candidatesubgraph} 
    \ForEach{$(v,t) \in  P$}{ \label{line:insertrapq_traversenodes}
        \ForEach{$(u,v) \in W_{G,\tau}$}{ \label{line:insertrapq_traverseedges}
            \If{$(u,s) \in T_x \wedge t=\delta(s, \phi(u,v))$}{ 
	            $P \leftarrow P \setminus$ \ref{alg:insertrapq_insert_edge}($T_x, (u,s), (v,t), (u,v) $) \label{line:insertrapq_connected} \\
            }
        }
    }
    
    \ForEach{$(v,t) \in P$}{ \label{line:insertrapq_candidateremaining}
        \If{$t \in F$}{
            $R \leftarrow R + (x,v)$ \label{line:insertrapq_removeresult}
        }
    }

	\Return $R_I$

\caption{}\label{alg:insertrapq_expiry}
\end{algorithm}

\begin{thm}\label{thm:rapq_insert_amortized}
The amortized cost of Algorithm \textbf{\ref{alg:arbitrary}} is $\mathcal{O}(n \cdot k^2)$, 
where $n$ is the number of distinct vertices in the window $W$ and $k$ is the number of states in the corresponding automaton $A$ of the the query $Q_R$. 
\end{thm}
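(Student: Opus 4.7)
The plan is an amortized analysis that bounds the total work performed by Algorithm \textbf{\ref{alg:arbitrary}} across the $m$ tuples of a single window and then divides by $m$. I would begin by using Lemma \ref{lem:arbitraryinvariants} to fix the data-structure size: the second invariant forces each spanning tree $T_x \in \Delta$ to contain at most $n \cdot k$ distinct vertex-state nodes, and there is one tree per distinct vertex in $W$, so $|\Delta| = O(n^2 \cdot k)$ at every point in time.

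Next I would decompose the per-tuple work of Algorithm \textbf{\ref{alg:arbitrary}} into two components. The outer double loop (Lines \ref{line:insertrapq_foreachtree}--\ref{line:insertrapq_foreachtransition}) visits each of the $n$ spanning trees and checks at most $k$ candidate source states $s$ with $\delta(s,l)=t$ for the incoming label $l$, contributing a worst-case $O(n \cdot k)$ per tuple. The nontrivial component is the cascading work inside the recursive Algorithm \textbf{\ref{alg:insertrapq_insert_edge}}. Here I would appeal to the guard at Line \ref{line:insertrapq_invariant2}, which makes Insert recurse only when the child node is genuinely new to $T_x$; combined with the first invariant of Lemma \ref{lem:arbitraryinvariants}, this means every recursive call is chargeable to a distinct vertex-state node insertion that has never previously coexisted in $T_x$.

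The amortization step then charges each recursive call to the node it creates. A newly inserted node $(v,t) \in T_x$ incurs $O(k)$ local work to test its at most $k$ outgoing transitions and enqueue descendant explorations, and the total number of such node insertions over an entire window is bounded by $|\Delta| = O(n^2 \cdot k)$. Hence the total cascade cost per window is $O(n^2 \cdot k^2)$; adding the $O(m \cdot n \cdot k)$ outer-loop contribution and dividing by $m$ (which is $\Omega(n)$ in the worst case of a sparse window with at least one edge per vertex) yields an amortized per-tuple cost of $O(n \cdot k^2)$, matching the theorem.

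The hard part will be folding the cost of Algorithm \textbf{\ref{alg:insertrapq_expiry}} into the bound, since expiry is triggered at user-defined slide intervals $\beta$ rather than per tuple and may re-route whole subtrees through alternative parents on Line \ref{line:insertrapq_connected}. I would handle this with a potential-style argument in which each candidate node in $P$ is charged $O(k)$ for testing its reconnection options; the total number of candidates across a full window is once again bounded by $|\Delta|$, and because the expiry cost is amortized over $\beta$ incoming tuples it is absorbed by the $O(n \cdot k^2)$ per-tuple budget without changing the asymptotic bound. Showing rigorously that a node reconnected via expiry and then re-explored by a subsequent Insert is not double-counted is the most delicate point; I would resolve it by letting the potential function decrease only when a node is permanently removed from $\Delta$, so that re-explorations pay for themselves out of the newly released potential.
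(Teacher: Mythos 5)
Your core charging argument is the same one the paper uses: invoke the second invariant of Lemma \ref{lem:arbitraryinvariants} to argue that Algorithm \textbf{\ref{alg:insertrapq_insert_edge}} is called at most once per (parent node, child node) combination per tree, so that over $m$ tuples each of the $n$ trees sees $\mathcal{O}(m\cdot k^2)$ invocations, i.e., $\mathcal{O}(n\cdot k^2)$ amortized per tuple. However, two of your accounting steps are wrong as stated. First, a newly inserted node $(v,t)$ does not incur $\mathcal{O}(k)$ local work: the exploration loop in Algorithm \textbf{\ref{alg:insertrapq_insert_edge}} ranges over all outgoing edges of $v$ in $G_{W,\tau}$, so the out-degree of $(v,t)$ in the product graph is $\deg^+(v)$, not $k$. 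The final bound survives only because the correct charge is per (graph edge, state pair): each edge $(v,w)$ and each pair $(s,t)$ with $t=\delta(s,\phi(v,w))$ triggers at most one invocation over the lifetime of the tree (either when the edge arrives and $(v,s)$ is already present, or later when $(v,s)$ is inserted), which is exactly how the paper reaches $m\cdot k^2$ invocations per tree directly, without needing your $|\Delta|=\mathcal{O}(n^2\cdot k)$ bound or the division by $m=\Omega(n)$.

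Second, and more seriously, the expiry cost cannot be ``absorbed by the $\mathcal{O}(n\cdot k^2)$ per-tuple budget,'' and the paper does not claim it can. Theorem \ref{thm:rapq_insert_amortized} covers only the insertion path; the paper bounds Algorithm \textbf{\ref{alg:insertrapq_expiry}} separately at $\mathcal{O}(m\cdot k^2)$ per tree ($\mathcal{O}(n\cdot m\cdot k^2)$ in total) per slide, amortized only over the slide interval $\beta$. Your sketch charges each candidate node $\mathcal{O}(k)$ for ``testing its reconnection options,'' but the reconnection step scans all incoming edges of the candidate vertex, so the per-candidate cost is $\mathcal{O}(\deg^-(v)\cdot k)$ and the total per slide is genuinely $\Theta(m\cdot k^2)$ per tree in the worst case --- this does not fit inside $\mathcal{O}(n\cdot k^2)$ per tuple unless $\beta=\Omega(m)$. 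Your instinct that expiry followed by re-insertion threatens the ``each node inserted at most once'' charging is a fair one (the paper's proof is silent on it), but the resolution is to scope the theorem to the insertion path, as the paper does, rather than to fold the expiry routine into the same budget.
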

\begin{proof}
Consider a tuple $t_\tau$ with an edge $e=(u,v)$ and label $l$ arriving for processing. 
Updating window $G_{W,\tau}$ with  edge $e$ (Line \ref{line:insertrapq_addwindow}) takes constant time. 
Thus, the time complexity of Algorithm \textbf{\ref{alg:arbitrary}} is the total number of times  Algorithm \textbf{\ref{alg:insertrapq_insert_edge}} is invoked.

First, we show that the amortized cost of updating a single spanning tree $T_x$ rooted at $(x,s_0)$ is constant in window size.
For an edge $(u,v)$ with label $l$, there could be $k$ many parent nodes $(u,s) \in T_x$ for each state $s$, and thus there could be at most $k^2$ invocations of Algorithm \textbf{\ref{alg:insertrapq_insert_edge}} with child node $(v,t)$, for each state $t$.
Upon arrival of the edge $e=(u,v)$, Algorithm \textbf{\ref{alg:insertrapq_insert_edge}} is invoked with nodes $(u,s)$ as parent and $(v,t)$ as child either when $(u,s)$ is already in $T_x$ at time $\tau$, $\tau - |W| < (u,s).ts \leq \tau$ (Line \ref{line:insertrapq_callinsert} in Algorithm \textbf{\ref{alg:arbitrary}}), or when $(u,s)$ is added to $T_x$ at a later point in time $(u,s).ts > \tau$ (Line \ref{line:insertrapq_callinsert2} in Algorithm \textbf{\ref{alg:insertrapq_insert_edge}}).
Note that Algorithm \textbf{\ref{alg:insertrapq_insert_edge}} is invoked with these parameters
at most once as Line \ref{line:insertrapq_invariant2} of Algorithm \textbf{\ref{alg:insertrapq_insert_edge}} extends a  node $(v,t)$ only if it is not in $T_x$.
The second invariant (Lemma \ref{lem:arbitraryinvariants}) guarantees that  $(u,s)$ appears in a spanning tree $T_x$ at most once.
Therefore, 
Algorithm \textbf{\ref{alg:insertrapq_insert_edge}} is invoked at most $m \cdot k^2$ over a sequence of $m$ tuples.
As there are at most $n$ spanning trees in $\Delta$, one for each $x \in G_{W,\tau}$, the total amortized cost 
is $\mathcal{O}(n \cdot k^2)$.
\end{proof}

Consequently, Algorithm {\textbf{\ref{alg:insertrapq_insert_edge}}} has {$\mathcal{O}$}$(n)$ amortized time complexity in terms of the number of vertices in the snapshot graph $G_{W,\tau}$.
As described previously,  Algorithm \textbf{\ref{alg:insertrapq_expiry}} might traverse the entire product graph 
and its worst case  complexity is $\mathcal{O}(m \cdot k^2)$. 
Therefore, the total cost of window maintenance over $n$ spanning trees is $\mathcal{O}(n \cdot m \cdot k^2)$.
This cost is amortized over the window slide interval $\beta$.

\subsection{Explicit Deletions}
\label{sec:arbitrary-delete}

The majority of real-world applications process append-only streaming graphs where existing tuples in the window expire only due to window movements.
However, there are applications that require users to explicitly delete a previously inserted edge.
We show that Algorithm \textbf{\ref{alg:insertrapq_expiry}} proposed in \S \ref{sec:arbitrary-insertonly} can be utilized  to support such explicit edge deletions.
Remember that in the append-only case, a node $(v,t)$ in a spanning tree $T_x \in \Delta$ is only removed when its timestamp falls outside the window range. 
An explicit deletion might require $(v,t) \in T_x$ to be removed if the deleted edge is on the path from $(x,s_0)$ to $(v,t)$ in the spanning tree $T_x$.
We utilize Algorithm \textbf{\ref{alg:insertrapq_expiry}} to remove such nodes so that explicit deletions and window management are handled in a uniform manner.

\begin{dft}[Tree Edge]\label{def:tree-edge}
Given a spanning tree $T_x$ at time $\tau$, an edge $e=(u,v)$ with label $l$ is a \textit{tree-edge} w.r.t $T_x$ if $(u,s)$ is the parent of $(v,t)$ in $T_x$ and there is a transition from state $s$ to $t$ with label $l$, i.e., $(u,s) \in T_x$, $(v,t) \in T_x$, $t=\delta(s,l)$, and $(v,t).pt = (u,s)$.
\end{dft}

Algorithm \textbf{\ref{alg:rapq_delete_edge}}  finds spanning trees where a deleted edge $(u,v)$ is a tree-edge (Line \ref{line:deleterapq_treeedge}) as per Definition \ref{def:tree-edge}. 
Deletion of the tree-edge from $(u,s)$ to $(v,t)$ in $T_x$ disconnects $(v,t)$ and its descendants from $T_x$. 
Algorithm \textbf{\ref{alg:rapq_delete_edge}} traverses the subtree rooted at $(v,t)$ and sets the timestamp of each node to $-\infty$, essentially marking them as expired (Line \ref{line:deleterapq_settimestamp}).
Algorithm \textbf{\ref{alg:insertrapq_expiry}} processes each expired node in $\Delta$ and checks if there exists an alternative path  comprised of valid edges in the window.
Algorithm \textbf{\ref{alg:rapq_delete_edge}} invokes  Algorithm \textbf{\ref{alg:insertrapq_expiry}} (Line \ref{line:deleterapq_callexpiry}) to manage explicit deletions using the same machinery of window management.
Deletion of a non-tree edge, on the other hand, leaves spanning trees unchanged so no modification is necessary other than updating the window content $G_{W,\tau}$.






	


\begin{algorithm}
\small
\SetAlgoRefName{Delete}
\caption{}\label{alg:rapq_delete_edge}
	\SetKwData{Left}{left}\SetKwData{This}{this}\SetKwData{Up}{up}
	\SetKwInOut{Input}{input}\SetKwInOut{Output}{output}
	\Input{Incoming tuple  $t_{\tau} = (\tau,e_{\tau},l,-), e_{\tau}=(u,v)$, \\
	    Window $G_{W,\tau - 1}$
	}
	\Output{
	The set of invalidated results $R_I$
	}
    $R_I \leftarrow \emptyset$ \\
    
	\ForEach{$T_x \in \Delta$}{
    	\ForEach{$s,t \in S \mid t=\delta(s,l) \land (v,t) \in T_x \land (v,t).pt = (u,s)$\label{line:deleterapq_treeedge}}{ 
    	    $T_{(x,v,t)} \leftarrow$ the subtree of $(v,t)$ in $T_x$ \label{line:deleterapq_traverse} \\
    	    \ForEach{$(w,q) \in T_{(x,v,t)}$\label{line:deleterapq_settimestamp}}{
    	        $(w,q).ts = -\infty$
    	    }
    	}
$R_I \leftarrow R_I \cup \ref{alg:insertrapq_expiry}(W_{G, \tau}, T_x, \tau)$ \label{line:deleterapq_callexpiry} \\
}
\Return  $R_I$ 
\end{algorithm}

\begin{thm}\label{thm:rapq_delete_amortized}
The amortized cost of Algorithm \textbf{\ref{alg:rapq_delete_edge}} is $\mathcal{O}(n^2 \cdot k)$
over a sequence of explicit edge deletions.
\end{thm}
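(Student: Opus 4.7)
The plan is to bound the cost of Delete by its two main phases: (1) subtree invalidation via traversal and timestamp reset, and (2) reconnection through ExpiryRAPQ. For each phase I will bound the per-tree work, then multiply by the $n$ spanning trees in $\Delta$ to obtain the per-deletion amortized bound.

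For the invalidation phase, I would first observe that identifying the matching tree-edges at Line \ref{line:deleterapq_treeedge} costs $\mathcal{O}(k^2)$ per tree, since there are at most $k$ pre-images $s$ of $t$ under $\delta(\cdot,l)$ and the parent check is constant time. More importantly, the subtree $T_{(x,v,t)}$ rooted at each newly disconnected node is a subset of $T_x$, and by the second invariant of Lemma \ref{lem:arbitraryinvariants} each vertex-state pair occurs in $T_x$ at most once, so $|T_x| \leq n \cdot k$. Hence traversing $T_{(x,v,t)}$ and stamping each of its nodes with $-\infty$ costs $\mathcal{O}(n \cdot k)$ per tree. Summing over all $n$ trees yields $\mathcal{O}(n^2 \cdot k)$ for the invalidation phase per deletion, which dominates the $\mathcal{O}(n \cdot k^2)$ contribution from tree-edge identification.

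For the reconnection phase, I would adapt the amortized argument of Theorem \ref{thm:rapq_insert_amortized}. The candidate set $P$ identified by ExpiryRAPQ has size at most $|T_x| = \mathcal{O}(n \cdot k)$, and by the second invariant of Lemma \ref{lem:arbitraryinvariants} each Insert call issued inside ExpiryRAPQ reintroduces a distinct node $(w,q)$ into $T_x$. Therefore at most $\mathcal{O}(n \cdot k)$ Insert invocations can occur per tree per deletion. Following the reasoning of Theorem \ref{thm:rapq_insert_amortized}, each Insert contributes $\mathcal{O}(1)$ amortized work when charged to its uniquely reintroduced node, and the subsequent exploration of outgoing edges at Line \ref{line:insertrapq_exploreedge} is also charged against the nodes it reaches, so the total ExpiryRAPQ cost per tree is $\mathcal{O}(n \cdot k)$. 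Across all $n$ trees this gives $\mathcal{O}(n^2 \cdot k)$ amortized per deletion for the reconnection phase.

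Adding the two phases gives the claimed $\mathcal{O}(n^2 \cdot k)$ amortized cost per deletion. The main obstacle is justifying the amortization of the edge-exploration work inside Insert when invoked from ExpiryRAPQ, since a naive bound of $\mathcal{O}(\text{in-degree} \cdot k)$ per expired node would blow up to $\mathcal{O}(m \cdot k^2)$ per tree in the worst case. I would resolve this by defining a potential function equal to the current number of nodes in $T_x$: every Insert call that reconnects a node to $T_x$ raises the potential by one and can be paid for by the matching decrease during the preceding subtree invalidation, so edge explorations that do not produce new reconnections are charged to subsequent operations via the standard accounting technique used in Theorem \ref{thm:rapq_insert_amortized}.
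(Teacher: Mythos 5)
There is a genuine gap in the reconnection phase. Your potential function (the current size of $T_x$) pays one unit per node removed during invalidation and one unit per node reattached by \ref{alg:insertrapq_insert_edge}, but the dominant cost of \ref{alg:insertrapq_expiry} is not the reattachments: it is the scan at Line~\ref{line:insertrapq_traverseedges}, which for \emph{every} potentially expired node $(v,t)\in P$ iterates over all incoming edges of $v$ in $G_{W,\tau}$, whether or not any of them yields a reconnection. That work is proportional to the in-degrees of the expired vertices (up to $\Theta(m\cdot k^2)$ per tree when $P$ is large and the window is dense), while your potential drops by only $|P|\le n\cdot k$, so the released potential cannot pay for it. Deferring the unpaid explorations ``to subsequent operations via the accounting of Theorem~\ref{thm:rapq_insert_amortized}'' does not rescue this: that accounting crucially uses the fact that in the append-only setting a pair $(u,s),(v,t)$ triggers \ref{alg:insertrapq_insert_edge} at most once over the whole stream because $T_x$ only grows, and explicit deletions destroy exactly that monotonicity --- the same node can be evicted and rescanned on every subsequent deletion. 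As a consequence, the per-deletion worst-case bound of $\mathcal{O}(n^2\cdot k)$ you are implicitly proving is too strong; the theorem is stated in amortized form precisely because a single tree-edge deletion can cost $\Theta(m\cdot k^2)$ for one tree.

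The paper amortizes across the \emph{sequence} of deletions rather than within one: the deciding observation is Line~\ref{line:deleterapq_treeedge} of \ref{alg:rapq_delete_edge} --- a deletion does any work on $T_x$ at all only if the deleted edge is a tree edge of $T_x$, and among the up to $m\cdot k^2$ product-graph edges at most $n\cdot k$ are tree edges (one per node of $T_x$, by the second invariant of Lemma~\ref{lem:arbitraryinvariants}). So over a sequence of $m$ deletions, only $\mathcal{O}(n\cdot k)$ of them per tree incur the expensive invalidate-and-expire path, while the remaining non-tree-edge deletions cost $\mathcal{O}(1)$ (just the window update); charging the expensive invocations against the long run of cheap ones gives the $\mathcal{O}(n\cdot k)$ amortized per-tree cost, hence $\mathcal{O}(n^2\cdot k)$ over the $n$ trees. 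Your invalidation-phase analysis (bounding $|T_x|$ by $n\cdot k$ via the second invariant) is fine and matches the paper, but to close the argument you need to replace the within-deletion potential argument by this tree-edge-versus-non-tree-edge dichotomy over the deletion sequence.
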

\begin{proof}
First, we evaluate the cost of an explicit deletion over a single spanning tree $T_x \in \Delta$, rooted at $(x,s_0)$.
Given a negative tuple with edge $(u,v)$ and label $l$, Line \ref{line:deleterapq_treeedge} identifies the corresponding set of tree edges in $T_x$ in $\mathcal{O}(n \cdot k)$ time.
For each such tree edge from $(u,s)$ to $(v,t)$ in $T_x$, Line \ref{line:deleterapq_traverse} traverses the spanning tree $T_x$ starting from $(v,t)$ to identify the set of nodes that are possibly affected by the deleted edge, thus its cost is $\mathcal{O}(n \cdot k)$.
Once timestamps of nodes in the subtree of $(v,t)$ is set to $-\infty$, Line \ref{line:deleterapq_callexpiry} invokes Algorithm \textbf{\ref{alg:insertrapq_expiry}} to process all expired nodes in $T_x$, whose time complexity is $\mathcal{O}(m \cdot k^2)$.
There can be at most $m \cdot k^2$ edges in the product graph of  snapshot $G_{W,\tau}$ with $m$ edges and automaton $A$ with $k$ edges. 
The amortized time complexity of maintaining a single spanning tree $T_x \in \Delta$ over a sequence of $m$ explicit deletion is $\mathcal{O}(n \cdot k)$ since at most $n \cdot k$ of those edges are tree edges.
Algorithm \textbf{\ref{alg:rapq_delete_edge}} does not need to process non-tree edges as a removal of a non-tree edge only need to update the window $G_{W,\tau}$, which is a constant time operation.
Therefore, the amortized cost of Algorithm \textbf{\ref{alg:rapq_delete_edge}} over a sequence of $m$ explicit edge deletions is $\mathcal{O}(n^2 \cdot k)$.
\end{proof}


\section{RPQ with Simple  Path Semantics}
\label{sec:simple}

In this section, we turn our attention to the problem of persistent RPQ evaluation on streaming graphs under the simple path semantics, that is finding pairs of vertices $u,v \in V$ where there exists a simple path (no repeating vertices) $p$ between $u$ and $v$ with a path label $w$ in the language $L(R)$.

The decision problem for Regular Simple Path Query (RSPQ), i.e., deciding whether a pair of vertices $u,v \in V$ is in the result set of a RSPQ $Q_R$, is NP-complete for certain fixed regular expressions, making the general problem NP-hard \cite{mendelzon1995finding}.
Mendelzon and Wood  \cite{mendelzon1995finding} 
show that there exists a batch algorithm to evaluate RSPQs on static graphs in the absence of conflicts, a condition on the cyclic structure of the graph $G$ and  the regular language $L(R)$ of the query $Q_R$.

\begin{dft}[Suffix Language]
Given an automaton $A = (S, \Sigma, \delta, s_0, F)$, 
the suffix language of a state $s$ is defined as $ [s] = \{ w \in \Sigma^* \mid \delta^*(s,w) \in F \}$; that is, the set of all strings that take $A$ from state $s$ to a final state $s_f \in F$. 
\end{dft}

\begin{dft}[Containment Property]
Automaton $A = (S, \Sigma, \delta, s_0, F)$ has the \textit{suffix language containment property} if for each pair $(s,t) \in S\times S$ such that $s$ and $t$ are on a path from $s_0$ to some final state and $t$ is a successor of $s$, $[s] \supseteq [t]$.
\label{def:containment}
\end{dft}

We compute and store the suffix language containment relation for all pairs of states during query registration, 
i.e., the time when the query $Q_R$ is first posed, and use these in the proposed streaming algorithm to detect conflicts. 
We can now precisely define conflicts.

\begin{dft}[Conflict]
\label{def:conflict}
There is a conflict at a vertex $u$  if and only if a traversal of the product graph $P_{G,A}$ starting from an initial node $(x,s_0) \in P_{G,A}$  visit node $u$ in states $s$ and $t$, and $[s] \not\supseteq [t]$. In other words, a tree $T_X$ is said to have a conflict between states $s$ and $t$ at vertex $u$ if $(u,s)$ is an ancestor of $(u,t)$ in the spanning tree $T_x$ and $[s] \not\supseteq [t]$.
\end{dft}

\begin{example}
Consider the streaming graph and the query in Figure {\ref{fig:running_example}} and the its spanning tree given in Figure {\ref{fig:spanning_tree_before_insert}}.
The node $(y,2)$ is added as a child of the node $(v,1)$ when edge $(v,y)$ arrives at $t=18$.
Based on Definition {\ref{def:conflict}}, there is a conflict at vertex $v$ as the path $p$ from the root node $(x,0)$ visits the vertex $v$ at states $1$ and $2$, and $[1] \not\supseteq [2]$.
\end{example}

\textbf{Batch Algorithm}:
Similar to the batch algorithm in \S \ref{sec:arbitrary}, 
the batch RSPQ algorithm \cite{mendelzon1995finding} starts a DFS traversal of the product graph from every vertex $x \in V$ with the start state $s_0$, and constructs a DFS tree, $T_x$. 
Each DFS tree maintains a set of markings that is used to prevent a vertex being visited more than once in the same state in a $T_x$. 
A node $(u,s)$ is added to the set of markings only if the depth-first traversal starting from the node $(u,s)$ is completed and no conflict is detected.
Mendelzon and Wood \cite{mendelzon1995finding} show that a RSPQ $Q_R$  can be evaluated in 
{$\mathcal{O}$}$(n \cdot m )$
in terms of the size of the graph $G$  by the batch algorithm in the absence of conflicts -- 
the same as the batch algorithm for RAPQ evaluation presented in \S \ref{sec:arbitrary}.
A query $Q_R$  on a graph $G$ is conflict-free if: 
(i) the automaton $A$ of $R$ has the suffix language containment property, (ii) $G$ is an acyclic graph, or (iii) $G$ complies with a cycle constraint compatible with $R$.
In following, 
we study the persistent RSPQ evaluation problem and show that the notion of \textit{conflict-freedom} ~\cite{mendelzon1995finding} is applicable to sliding windows over streaming graphs, admitting an efficient evaluation algorithm in the absence of conflicts.

\subsection{Append-only Streams}
\label{sec:simple-insertonly}
First, 
we present an incremental algorithm for RSPQ evaluation based on its RAPQ counterpart (Algorithm \textbf{\ref{alg:ProcessEdgeRSPQ}}) with implicit window semantics and we show that the proposed streaming algorithm matches the complexity characteristics of the batch algorithm for RSPQ evaluation on static graphs \cite{mendelzon1995finding}, i.e., it admits efficient solutions under the same conditions as the batch algorithm.


\begin{algorithm}
\small
\SetAlgoRefName{RSPQ}
	\SetKwData{Left}{left}\SetKwData{This}{this}\SetKwData{Up}{up}
	\SetKwInOut{Input}{input}\SetKwInOut{Output}{output}
	\Input{Incoming tuple $t_{\tau} = (\tau,e_{\tau},l,op), e_{\tau}=(u,v)$
	        }
	$G_{W,\tau} \leftarrow G_{W,\tau-1} + e_{\tau}$  \label{line:insertrspq_addwindow} \\
     \ref{alg:insertrspq_expiry}($G_{W, \tau}, T_x, \tau$) $\forall T_x \in \Delta$ \label{line:insertrspq_callexpiry} \tcp{ with $\beta$ intervals}
    set of results $R \leftarrow \emptyset $ 


    \ForEach{$T_x \in \Delta$}{ \label{line:insertrspq_foreachtree}
        \ForEach{$s,t \in S$ where $t=\delta(s,l)$}{ \label{line:insertrspq_foreachtransition}
            \If{$(u,s) \in T_x \wedge (u,s).ts > \tau - |W| $}{ \label{line:insertrspq_windowedge}
                $p \leftarrow PATH(T_x, (u,s))$ \tcp{the prefix path} \label{line:insertrspq_retrievepp} 
                \If{$ t \not\in p[v] \wedge (v,t) \not\in M_x$}{ \label{line:epp_case1_1}
                    $R \leftarrow R +$ \ref{alg:extend}($T_x, p, (v,t), e_{\tau}$) \label{line:deleterspq_invokeepp1} 
                }
            }
        }
    }
    
	$Q_R(S, W, \tau) \leftarrow Q_R(S, W, \tau-1) + R$ \label{line:rspq_addresult}
	\caption{}\label{alg:ProcessEdgeRSPQ}
\end{algorithm} 

\begin{algorithm}
\small
\SetAlgoRefName{Extend}
	\SetKwData{Left}{left}\SetKwData{This}{this}\SetKwData{Up}{up}
	\SetKwInOut{Input}{input}\SetKwInOut{Output}{output}
	\Input{Spanning Tree $T_x$, Prefix Path $p$, \\ 
	Node $(v,t)$, Edge $e=(u,v)$ \\
	}
	\Output{ Set of results $R$
	}
	$R \leftarrow \emptyset$ \\
    \uIf{$q = FIRST(p[v])$ and $[q] \not\supseteq [t]$}{ \label{line:epp_case3}
        \ref{alg:unmark}($T_x, p$) \tcp{$q$ and $t$ have a conflict at vertex $v$} 
    }

    \Else{ \label{line:epp_case4}
        \If{$t \in F$}{
            $R \leftarrow  R + (x,v)$ \label{line:epp_addresult}
        }
        \If{$(v,t) \notin T_x$}{
            $M_x \leftarrow M_x \bigcup (v,t)$ \label{line:deleterspq_addmarking}\\
        }
        add $(v,t)$ as $(u,s)$'s child in $T_x$ 
        \label{line:deleterspq_addleaf}\\
        $p_{new} \leftarrow p + [v,t]$\\
        $p_{new}.ts = min(e.ts, p.ts)$ \\
        \ForEach{edge $e=(v,w) \in W_{G,\tau}$ s.t. $\delta(t, \phi(e))=r$} {    
            \If{$ r \not\in p_{new}[w] \wedge (w,r) \not\in M_x$}{\label{line:epp_case1_2}
                $R \leftarrow R +$ \ref{alg:extend}($T_x, p_{new}$, $(w,r), e$)  \label{line:deleterspq_invokeepp2} \\ 
            }
        }
    }
    \Return $R$
	\caption{}\label{alg:extend}
\end{algorithm}

\begin{algorithm}
\small
\SetAlgoRefName{Unmark}
	\SetKwData{Left}{left}\SetKwData{This}{this}\SetKwData{Up}{up}
	\SetKwInOut{Input}{input}\SetKwInOut{Output}{output}
	\Input{Spanning Tree $T_x$, Prefix Path $p$}
    $Q \leftarrow \emptyset$  \\
    \While{$p \neq \emptyset \wedge (v,t) = LAST(p) \wedge (v,t) \in M_x $}{
        $M_x \leftarrow M_X \setminus (v,t)$ \\
        $Q \leftarrow Q + (v,t) $ \\
        $p \leftarrow PATH(T_x, (v,t).parent$
    }
    
    \ForEach{$(v,t) \in Q$}{
        \ForEach{edge $e=(w,v) \in G_{W,\tau}$ s.t. $ t=\delta(q, \phi(e))$ \label{line:unmark_incoming}}{
            \If{$(w,q) \in T_x \wedge t \notin p[v]$ }{
                $p_{candidate} \leftarrow PATH(T_x, (w,q))$\\
                \ref{alg:extend}($T_x, p_{candidate}, (v,t), e$) \\
            }
        }
	}
	\caption{}\label{alg:unmark}
\end{algorithm}

\begin{algorithm}
\small
\SetAlgoRefName{ExpiryRSPQ}
	\SetKwData{Left}{left}\SetKwData{This}{this}\SetKwData{Up}{up}
	\SetKwInOut{Input}{input}\SetKwInOut{Output}{output}
	\Input{Window $G_{W,\tau}$, timestamp $\tau$, \\
	        Spanning Tree $T_x$
	}
	\Output{The set of invalidated results $R_I$}
   
    $R_I \leftarrow \emptyset$ \\
    $E = \{ (v,t) \in T_x \mid (v,t).ts \leq \tau - |W| \}$ \tcp{expired nodes} \label{line:insertrspq_candidatenodes} 
    $P \leftarrow M_x \cap E$
    
    $T_x \leftarrow T_x \setminus E$ \tcp{prune $T_x$} \label{line:insertrspq_prunetree} 
    $M_x \leftarrow M_x \setminus E$ \tcp{prune $M_x$} \label{line:insertrspq_prunemarking} 
    
    \ForEach{$(v,t) \in  P$ }{ \label{line:insertrspq_traversenodes}
        \ForEach{$(u,v) \in W_{G,\tau}$ s.t. $(u,s) \in T_x \wedge t=\delta(s, \phi(u,v))$}{ \label{line:insertrspq_traverseedges}
            $p \leftarrow PATH(T_x, (u,s))$ \\
            $P \leftarrow P \setminus $ \ref{alg:extend}($T_x, p, (v,t), (u,v)$)
            }
        
    }
    
    \ForEach{$(w,q) \in P$ \label{line:deleterspq_loop2}}{

        \If{all siblings of $(w,q)$ are in $M_x$}{
            $M_x \leftarrow M_x + (w,q).parent$
        }
        \If{$q \in F$}{
            $R_I \leftarrow R_I + (x,w)$ \label{line:deleterspq_removeresult}
        }
    }
    \Return $R_I$
	\caption{}\label{alg:insertrspq_expiry}
\end{algorithm}


\begin{dft}[Prefix Paths]\label{def:prefixpath}
Given a node $(u,s) \in T_x$, we say that the path from the root to $(u,s)$ is the prefix path $p$ for node $(u,s)$. 
We use the notation $p[v], v\in V$ to denote the set of states that are visited in vertex $v$ in path $p$, i.e., $p[v] = \{ s \in S \mid (v,s) \in p\}$.
\end{dft}

\begin{dft}[Conflict Predecessor]\label{def:conflict_predecessor}
A node $(u,s) \in T_x$ is a conflict predecessor if for some successor $(w,t)$ of $(u,s)$ in $T_x$, $(w,q)$ is the first occurrence of vertex $w$ in the prefix path of $(u,s)$ and there is a conflict between $q$ and $t$ at $w$, i.e., $[q] \not\supseteq [t]$.
\end{dft}

In addition to tree index $\Delta$ of Algorithm \textbf{\ref{alg:arbitrary}} in \S \ref{sec:arbitrary}, Algorithm \textbf{\ref{alg:ProcessEdgeRSPQ}} maintains a set of markings $M_x$ for each spanning tree $T_x$. 
The set of markings $M_x$ for a spanning tree $T_x$ is the set of nodes in $T_x$ with no descendants that are conflict predecessors (Definition {\ref{def:conflict_predecessor}}). 
In the absence of conflicts, there is no conflict predecessor and 
$M_x$ contains all nodes in $T_x$.
Algorithm \textbf{\ref{alg:ProcessEdgeRSPQ}} does not visit a node in $M_x$ (Lines \ref{line:epp_case1_1} in Algorithm \textbf{\ref{alg:ProcessEdgeRSPQ}} and \ref{line:epp_case1_2} in Algorithm \textbf{\ref{alg:extend}}) and therefore
a node $(u,s)$ appears in the spanning tree $T_x$ at most once in the absence of conflicts.
Consequently, Algorithm \textbf{\ref{alg:ProcessEdgeRSPQ}} maintains the second invariant of $\Delta$ and behaves similar to the  Algorithm \textbf{\ref{alg:arbitrary}} presented in \S \ref{sec:arbitrary-insertonly}.
On static graphs, the batch algorithm adds a node $(u,s)$ to the set of markings only after the entire depth-first traversal of the product graph from $(u,s)$ is completed, ensuring that the set $M_x$ is monotonically growing. 
On the other hand, tuples that arrive later in the streaming graph $S$ might lead to a conflict with a node  $(u,s)$ that is already in $M_x$, and Algorithm \textbf{\ref{alg:ProcessEdgeRSPQ}} removes $(u,s)$'s ancestors from the set of markings $M_x$.
As described later, Algorithm \textbf{\ref{alg:ProcessEdgeRSPQ}} correctly identifies these conflicts  and updates the spanning tree $T_x$ and its set of markings $M_x$ to ensure  correctness.
The conflict detection mechanism signals to our algorithm that the corresponding traversal cannot be pruned even if it visits a previously visited vertex. 
In other words, a node $(u,s) \not\in M_x$ may be visited more than once in a spanning tree $T_x$ to ensure correctness.
Consequently, Algorithm \textbf{\ref{alg:ProcessEdgeRSPQ}}  traverses every simple path that satisfies the given query $Q_R$ if every node in $T_x$ is a conflict predecessor ($M_x = \emptyset$), leading to exponential time execution in the worst case.
In summary, Algorithm \textbf{\ref{alg:ProcessEdgeRSPQ}} differs from its arbitrary path semantics counterpart in two major points: (i) it may traverse a vertex in the same state more than once if a conflict is discovered at the vertex, and (ii) it keeps track of conflicts and maintains a set of markings to prevent multiple visits of the same vertex in the same state whenever possible.

For each incoming tuple $t_{\tau} (ts,e,l,+),e=(u,v)$, Algorithm \textbf{\ref{alg:ProcessEdgeRSPQ}} finds prefix paths of all $(u,s) \in T_x$ (Line \ref{line:insertrspq_retrievepp} ); that is, the set of  paths in $T_x$ from the root node to $(u,s)$  (note that there exists a single such node $(u,s)$ and its corresponding prefix path if $(u,s) \in M_x$).
Then it performs one of the following four steps for each node $(u,s) \in T_x$ and its corresponding prefix path $p$:
\begin{enumerate}
    \item $t \in p[v]$: The vertex $v$ is visited in the same state $t$ as before, thus path $p$ is pruned as extending it with $(v,t)$ leads to a cycle in the product graph $P_{G,A}$ (Line \ref{line:epp_case1_1} in \textbf{\ref{alg:ProcessEdgeRSPQ}} and Line \ref{line:epp_case1_2} \textbf{\ref{alg:extend}}).
    \item $(v,t) \in M_x$: The target node $(v,t)$ has already been visited in $T_x$ and it has no conflict predecessor descendant. Therefore  path $p$ is pruned (Line \ref{line:epp_case1_1} in \textbf{\ref{alg:ProcessEdgeRSPQ}}, \ref{line:epp_case1_2} in \textbf{\ref{alg:extend}}).
    \item $q = FIRST(p[v])$ and $[q] \not\supseteq [t]$: States $q$ and $t$ have a conflict at vertex $v$ (Line \ref{line:epp_case3} in \textbf{\ref{alg:extend}}), making $(u,s)$ a conflict predecessor.
    Therefore, all ancestors of $(u,s)$ in $T_x$ are removed from $M_x$ (Algorithm \textbf{\ref{alg:unmark}}).
    During unmarking of a node $(v_i, s_i) \in M_x$, all $(w,q) \in T_x$ where $(w,v_i) \in G_{W,\tau}$ and $s_i = \delta(q, \phi(w,v_i))$ are considered as candidate for traversal as they were previously pruned due to $(v_i, s_i)$ being marked. 
    \item Otherwise path $p$ is extended with $(v,t)$, i.e., $(v,t)$ is added as a child to $(u,s)$ in $T_x$. (Line \ref{line:epp_case4} in \textbf{\ref{alg:extend}})
\end{enumerate}

As described previously, an important difference between the proposed streaming algorithm and the batch algorithm \cite{mendelzon1995finding} is that the  streaming version may remove nodes from the set of markings $M_x$ whereas a node in $M_x$ cannot be removed in the batch model.
Hence, the batch algorithm can safely prune a path $p$ if it reaches a node $(u,s) \in M_x$ as the suffix language containment property ensures correctness.
The streaming model, on the other hand, requires a special treatment as $M_x$ is not monotonically growing.
Case 2  above prunes a path $p$ if it reaches a node $(u,s) \in M_x$ 
as in the batch algorithm.
Unlike the batch algorithm,  a node $(u,s)$ may be removed from $M_x$  due to a conflict that is caused by an edge that later arrives.
This conflict implies that  path $p$ should not have been pruned.
Case 3 above and Algorithm \textbf{\ref{alg:unmark}} address exactly this scenario: ancestors of a conflict predecessor is removed from $M_x$.

Whenever a node $(u,s)$ is removed from $M_x$ due to a conflict at one of its descendants, Algorithm \textbf{\ref{alg:unmark}} finds all paths that are previously pruned due to $(u,s)$ by traversing incoming edges of $(u,s) \in G_{W,\tau}$ and invokes Algorithm  \textbf{\ref{alg:extend}} for each such path.
It enables Algorithm  \textbf{\ref{alg:extend}} to backtrack and evaluate all paths that would not be pruned by Case 2 if $(u,s)$ were not in $M_x$, ensuring the correctness of the algorithm.


The following example illustrates this behaviour of Algorithm  \textbf{\ref{alg:ProcessEdgeRSPQ}}.

\begin{figure}
    \centering
    \includegraphics[height=1.6in]{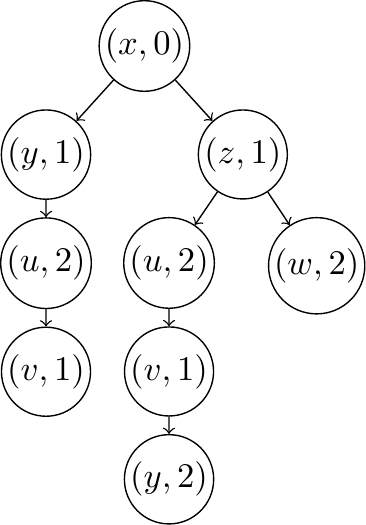}
    \caption{A spanning tree $T_x$ constructed by Algorithm \ref{alg:ProcessEdgeRSPQ} for the example  in Figure \ref{fig:running_example}.}
    \label{fig:spanning_tree_simple}
\end{figure}

\begin{example}
Consider the streaming graph and the query in Figure \ref{fig:running_example} and the its spanning tree given in Figure \ref{fig:spanning_tree_before_insert}, and assume for now that Algorithm  \textbf{\ref{alg:ProcessEdgeRSPQ}} does not detect conflicts and only traverses simple paths in $G_{W,\tau}$.
After processing  edge $(x,y)$ at time $t=13$, it adds node $(u,2)$ as a successor of $(y,1)$.
Edge $(z,u)$ arrives at $t=14$, however $(u,2)$ is not added as $(z,1)$'s child as $(u,2)$ already exists in $T_x$.
Later at $t=18$, edge $(v,y)$ arrives, but $(y,2)$ is not added  to the spanning tree $T_x$ as the path $\langle x,y,u,v,y \rangle$ forms a cycle in $G_{W,\tau}$.
As a result, $(y,2)$ is never visited and $(x,y)$ is never reported even though there exists a simple path in $G_{W,\tau}$
from $x$ to $y$, that is $\langle x,z,u,v,y \rangle$.

Instead, Algorithm  \textbf{\ref{alg:ProcessEdgeRSPQ}} detects the conflict at the vertex $v$ between states $1$ and $2$ after  edge $(v,y)$ arrives at time $t=18$ as $FIRST(p[y])=1$ and $[1] \not\supseteq [2]$.
Algorithm  \textbf{\ref{alg:unmark}} removes all ancestors of $(y,2)$ from $M_x$ and, during unmarking of $(u,2)$, the prefix path $p$ from $(x,0)$ to $(z,1)$ is extended with $(u,2)$.
Finally, Algorithm  \textbf{\ref{alg:extend}} traverses the simple path $\langle x,z,u,v,y \rangle$ and adds $(x,y)$ to the result set.
Figure \ref{fig:spanning_tree_simple} depicts the spanning tree $T_x \in \Delta$ at time $t=18$.
\end{example}

Similar to its arbitrary counterpart, Algorithm {\textbf{\ref{alg:ProcessEdgeRSPQ}}} invokes Algorithm {\textbf{\ref{alg:insertrspq_expiry}}} at each user-defined slide interval $\beta$.
It first identifies the set of candidate nodes whose timestamp is not in  $(\tau - |W|, \tau ]$ (Line {\ref{line:insertrspq_candidatenodes}}).
Unmarked candidate nodes ($M_x \setminus E$) can safely be removed from $T_x$ as the unmarking procedure already considers all valid edges to an unmarked node.
Hence, Algorithm {\textbf{\ref{alg:insertrspq_expiry}}} reconnects a candidate node with a valid edge only if it is marked (Line {\ref{line:insertrspq_traversenodes}}).
Finally, it extends the set of marking with nodes that are not conflict predecessors any longer (Line {\ref{line:deleterspq_loop2}}).

\begin{thm}\label{thm:insertrspq_correctness}
The algorithm \textbf{\ref{alg:ProcessEdgeRSPQ}} is correct and complete.
\end{thm}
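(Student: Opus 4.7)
The plan is to lift the argument of Theorem \ref{thm:arb_correctness} to the simple-path setting. The proof has two halves, soundness (``if the algorithm reports $(x,v)$, a qualifying simple path exists in $G_{W,\tau}$'') and completeness (``every qualifying pair is reported''), with soundness being a relatively direct extension and completeness carrying the bulk of the work.

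First I would establish a two-part invariant preserved by Algorithms \ref{alg:ProcessEdgeRSPQ}, \ref{alg:extend}, \ref{alg:unmark} and \ref{alg:insertrspq_expiry} after every tuple is processed. (i) A node $(v,t)$ belongs to $T_x$ with $(v,t).ts > \tau - |W|$ exactly when the prefix path $p$ from $(x,s_0)$ to $(v,t)$ is a walk in $G_{W,\tau}$ with $\delta^*(s_0,\phi(p))=t$, $p.ts = (v,t).ts$, and $t \notin p'[v]$ for any strict prefix $p'$ of $p$. (ii) $(v,t) \in M_x$ iff the subtree of $T_x$ rooted at $(v,t)$ contains no conflict predecessor in the sense of Definition \ref{def:conflict_predecessor}. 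Invariant (ii) is preserved because Case 2 forbids routing through marked destinations, while Algorithm \ref{alg:unmark} walks up the prefix path of a newly discovered conflict predecessor and removes exactly the ancestors whose subtrees now contain one.

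Soundness then follows by induction on the depth at which a result is inserted: if $(x,v)$ is added at Line \ref{line:epp_addresult} of Algorithm \ref{alg:extend}, invariant (i) gives a walk in $G_{W,\tau}$ from $x$ to $v$ with label in $L(R)$ that does not revisit any vertex in the same state, and the conflict-freedom hypothesis together with the suffix language containment property upgrades this walk to a simple path with the same endpoints and a label in $L(R)$, reusing the classical argument of Mendelzon and Wood \cite{mendelzon1995finding}.

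Completeness is the main obstacle. Fix a simple path $p = \langle x, v_1, \ldots, v_n = v\rangle$ in $G_{W,\tau}$ with label in $L(R)$, and let $s_i = \delta^*(s_0, \phi(\langle x, \ldots, v_i\rangle))$. I would show by induction on $i$ that $(v_i,s_i)$ is present in $T_x$ no later than the arrival of the edge that gives $p$ its window timestamp. The delicate step is when the processing of the $i$-th edge would insert $(v_i,s_i)$ via Algorithm \ref{alg:extend} but is blocked because $(v_i,s_i)$ already sits in $M_x$ as a descendant of some other prefix path $p'$. Conflict-freedom on $p'$ already exhibits a witness reaching any final state reachable from $(v_i,s_i)$, so the pair $(x,v)$ is either reported through $p'$ or, if a subsequent edge turns a descendant of $(v_i,s_i)$ into a conflict predecessor, Algorithm \ref{alg:unmark} retracts the marking and its loop at Line \ref{line:unmark_incoming} reconsiders every incoming edge of the freshly unmarked node, re-invoking Algorithm \ref{alg:extend} with the prefix path corresponding to our target $p$. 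A secondary induction on the number of markings simultaneously retracted by a single Unmark invocation closes the case. Termination is ensured by observing that Case 1 forbids revisiting a vertex in the same state within a single recursion, bounding depth by $|V_{W,\tau}|$. The hardest bookkeeping lies in aligning the two distinct mutators of $M_x$, namely Algorithm \ref{alg:unmark} (which transiently shrinks $M_x$ during insertion) and Algorithm \ref{alg:insertrspq_expiry} (which jointly prunes $T_x$ and $M_x$ at window slides), with invariant (ii); this is where the technical care of the proof will be concentrated.
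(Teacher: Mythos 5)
Your decomposition into soundness and completeness, the use of the Mendelzon--Wood cycle-shortcutting argument for soundness, and the completeness induction along the target simple path with the case split on whether the blocking node stays in $M_x$ (suffix already covered by its conflict-free subtree) or is later retracted by \ref{alg:unmark} (blocked prefix re-extended via Line \ref{line:unmark_incoming}) all mirror the paper's proof; the explicit invariants (i)/(ii) are a more structured scaffolding for what the paper argues informally.

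There is, however, one concrete misstep in the soundness half: you invoke ``the conflict-freedom hypothesis'' to upgrade the traversed walk to a simple path, but the theorem carries no such hypothesis --- the algorithm (and the paper's proof) must be sound even on instances with conflicts, since the whole \ref{alg:unmark} machinery exists precisely to handle them. The correct licence for the shortcutting is local, not global: whenever \ref{alg:extend} permits a walk to revisit a vertex $v$ whose first occurrence on the prefix path is in state $q$ and whose new occurrence is in state $t$, Line \ref{line:epp_case3} has verified $[q] \supseteq [t]$; that per-revisit containment is exactly what the Mendelzon--Wood argument needs to splice out the cycle, and it holds for every materialized prefix path regardless of whether the instance is conflict-free. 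Your invariant (i) as stated does not record this (it only records the Case-1 ``no same-state revisit'' condition), so as written the upgrade step has nothing to stand on once a conflict exists anywhere in the tree. Relatedly, invariant (i) should be one-directional: the ``exactly when'' fails right-to-left because qualifying walks blocked by marked nodes are deliberately not materialized in $T_x$, and in the presence of conflicts a node may occur in $T_x$ more than once, so ``the prefix path from $(x,s_0)$ to $(v,t)$'' is not unique. Both points are repairable --- strengthen invariant (i) to record the Case-3 containment at each repeated vertex and weaken it to an implication --- after which your argument coincides with the paper's.
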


\begin{proof}

\textbf{If:} If the proposed algorithm traverses the path $p$, it correctly adds it to the result set $R$ and consecutively $Q_R(G_{\tau})$ (Line \ref{line:epp_addresult} and \ref{line:rspq_addresult} in  Algorithm \textbf{\ref{alg:extend}}). 
The reason $p$ is not traversed is due to a marked node (Case 2 of the proposed algorithm) as no vertex appears more than once in $p$ (as it is a simple path).
Let the last node visited in $p$ be $(v,t)$ and its successor on $p$ be $(w,r)$.
The initial part of path $p$ from $(x,s_0)$ to $(v,t)$ is not extended by $(w,r)$ as $(w,r) \in M_x$
If $(w,r)$ is removed from $M_x$ due to a conflict predecessor descendant of $(w,r)$, Algorithm \textbf{\ref{alg:unmark}} guarantees that the initial part of path $p$ from $(x,s_0)$ to $(v,t)$ 
is extended with $(w,r)$ as $(v,t) \in T_x$ and $(v,w) \in E$ and $r=\delta(t,\phi(v,w))$  (Line \ref{line:epp_case3} of Algorithm \textbf{\ref{alg:unmark}}). 
As a result, the path from $(v,t)$ to $(u,s_f)$ is discovered and $(x,u)$ is added to $Q_R(G_{\tau})$.
If $(w,r)$ remains in $M_x$, we know that $(w,r)$ does not have any descendants that is a conflict predecessor.
Therefore, $(u,s)$ must have been traversed as a descendant of $(w,r)$, adding $(x,u)$ to $Q_R(G_{\tau})$.

\textbf{Only if:} Assume that $p$ is not simple, meaning that there exists a node $v$ that appears in $p$ more than once. 
The first such occurrence is $(v,s_1) \in p$ and the last such occurrence is $(v,s_2) \in p$. 
For $(v,s_2)$ to be visited, $[s_1] \not\supseteq [s_2]$ must have been false (Line \ref{line:epp_case3} in Algorithm \textbf{\ref{alg:extend}}). 
The containment property (Definition \ref{def:containment}) implies that there exists a path $p'$ from $(v,s_1)$ to $(u,s_f^2)$, $s_f^2 \in F$ such that the sequence of vertices on $p'$ is identical to those in $p$ from $(v,s_2)$ to $(u,s_f)$. 
Note that $(v,s_1)$ and $(v,s_2)$ are the first and last occurrences of $v$ in $p$, therefore there exists a simple path in $P_{G,A}$ from $(x,s_0)$ to $(u,s_f^2), s_f^2 \in F$ where the vertex $v$ appears only once.
By simple induction on the number of repeated vertices, we conclude that there is a simple path in $G$ from $x$ to $u$ where the path label is in $L(R)$, and thus $(x,u)$ is added to $Q_R(G_{\tau})$. 
\end{proof}

\begin{thm}\label{thm:rspq_insert_amortized}
The amortized cost of Algorithm \textbf{\ref{alg:ProcessEdgeRSPQ}} is $\mathcal{O}(n \cdot k^2)$,
where $n$ is the number of distinct vertices in the window $W$ and $k$ is the number of states in the corresponding automaton $A$ of the query $Q_R$.
\end{thm}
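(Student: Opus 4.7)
The plan is to mirror the amortized analysis from Theorem \ref{thm:rapq_insert_amortized}, exploiting the fact that, under the no--conflict assumption noted in the footnote, Algorithm \textbf{\ref{alg:ProcessEdgeRSPQ}} behaves structurally like its arbitrary--path counterpart. The guiding observation is that in the conflict-free case, no node $(u,s)$ is ever a conflict predecessor, so every node that is ever added to a spanning tree $T_x$ ends up (and remains) in the marking set $M_x$. Consequently, Case~3 of the algorithm is never triggered and Algorithm \textbf{\ref{alg:unmark}} is never invoked along a productive branch.

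First I would establish that, under conflict-freedom, the second invariant from Lemma \ref{lem:arbitraryinvariants} is preserved: a node $(u,s)$ appears in a spanning tree $T_x$ at most once with a timestamp in $(\tau-|W|,\tau]$. This is immediate because Cases~1 and~2 prune whenever $(v,t)$ is already in the prefix path $p$ or already in $M_x$, and Case~4 adds $(v,t)$ to $M_x$ the moment it is inserted into $T_x$ (Lines \ref{line:deleterspq_addmarking}--\ref{line:deleterspq_addleaf} of Algorithm \textbf{\ref{alg:extend}}). Hence any subsequent attempt to re-extend to $(v,t)$ is pruned by Case~2.

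Next I would bound the number of invocations of Algorithm \textbf{\ref{alg:extend}} per spanning tree. Exactly as in Theorem \ref{thm:rapq_insert_amortized}, for each incoming edge $(u,v)$ with label $l$, Algorithm \textbf{\ref{alg:extend}} can be invoked on $T_x$ with parent node $(u,s)$ and child node $(v,t)$ only if $\delta(s,l)=t$. There are at most $k^2$ such state pairs, and the second invariant guarantees that each such parent--child combination is attempted at most once per edge over the lifetime of the spanning tree (either when the edge arrives and $(u,s)$ is already present, invoked from Line \ref{line:deleterspq_invokeepp1} of Algorithm \textbf{\ref{alg:ProcessEdgeRSPQ}}, or when $(u,s)$ is inserted later with the edge already in the window, invoked from Line \ref{line:deleterspq_invokeepp2} of Algorithm \textbf{\ref{alg:extend}}). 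Summing over $m$ edges yields at most $m\cdot k^2$ invocations per tree; since there are at most $n$ spanning trees in $\Delta$, the total work is $\mathcal{O}(n\cdot m\cdot k^2)$, giving an amortized cost of $\mathcal{O}(n\cdot k^2)$ per incoming tuple. Window updates on $G_{W,\tau}$ (Line \ref{line:insertrspq_addwindow}) and scheduled calls to \textbf{\ref{alg:insertrspq_expiry}} are constant per tuple and amortized over the slide interval $\beta$, respectively, in the same manner as in \S \ref{sec:arbitrary-insertonly}.

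The main obstacle I anticipate is justifying that Algorithm \textbf{\ref{alg:unmark}} contributes nothing to the amortized bound. The key point is that, by Definition \ref{def:conflict}, the existence of a conflict predecessor requires two occurrences of the same vertex in different, non-containing states along a root-to-leaf path in some $T_x$; the no--conflict assumption precludes this globally, so the guard $[q] \not\supseteq [t]$ on Line \ref{line:epp_case3} of Algorithm \textbf{\ref{alg:extend}} is never satisfied, and no node is ever unmarked. Thus the bookkeeping overhead of the marking machinery collapses, and the analysis reduces to that of Algorithm \textbf{\ref{alg:arbitrary}}, yielding the claimed $\mathcal{O}(n\cdot k^2)$ amortized bound.
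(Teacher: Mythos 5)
Your proposal is correct and follows essentially the same route as the paper's own proof: restrict to the conflict-free case so that Algorithm \textbf{\ref{alg:unmark}} is never invoked and every node added to $T_x$ is immediately and permanently marked, conclude that each node $(v,t)$ appears at most once per tree, and then count at most $k^2$ invocations of Algorithm \textbf{\ref{alg:extend}} per incoming edge per tree (via Lines \ref{line:deleterspq_invokeepp1} and \ref{line:deleterspq_invokeepp2}), summing to $\mathcal{O}(m \cdot k^2)$ per tree over $m$ tuples and $\mathcal{O}(n \cdot k^2)$ amortized over the $n$ trees. Your additional remarks on the second invariant and on amortizing the expiry procedure over the slide interval are consistent with the paper's treatment elsewhere and do not change the argument.
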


\begin{proof}
It is important to stress that the proposed algorithm might take exponential time in the size of the stream in the presence of conflicts as RSPQ evaluation is NP-hard in its general form \cite{mendelzon1995finding}.
Therefore, first we focus on streaming RSPQ evaluation in the absence of conflicts and show that the cost of updating a single spanning tree $T_x$ and its markings $M_x$ is constant in the size of the stream.

The cost of Algorithm \textbf{\ref{alg:ProcessEdgeRSPQ}} for updating a single spanning tree $T_x$ is determined by the total cost of invocations of Algorithm \textbf{\ref{alg:extend}}.
In the absence of conflicts, Algorithm \textbf{\ref{alg:extend}} never invokes Algorithm \textbf{\ref{alg:unmark}}, and the cost of updating $R$ (Line \ref{line:epp_addresult}), $M_x$ (Line \ref{line:deleterspq_addmarking}) and $T_x$ (Line \ref{line:deleterspq_addleaf}) are all constant. 
Therefore the cost of Algorithm \textbf{\ref{alg:extend}} and thus the cost of Algorithm \textbf{\ref{alg:ProcessEdgeRSPQ}} are determined by the number of invocations of Algorithm \textbf{\ref{alg:extend}}.

Algorithm \textbf{\ref{alg:extend}} checks if a prefix path $p$ whose last node in $(u,s)$ for some $t=\delta(s,l)$ can be extended with $(v,t)$.
We argue that each node $(v,t)$ appears in $T_x$ at most once.
The first time Algorithm \textbf{\ref{alg:extend}} is invoked with some prefix path $p$ and node $(v,t)$, path $p$ is extended and node $(v,t)$ is added to $T_x$ and $M_x$ (Line \ref{line:epp_case4}).
Consecutive invocation of Algorithm \textbf{\ref{alg:extend}} with node $(v,t)$ does not perform any modifications on $T_x$ or $M_x$
as $(v,t)$ is guaranteed to remain marked in absence of conflicts. 
Therefore, each node $(v,t)$ appears only once in each spanning tree $T_x$ in the absence of conflicts (a node is removed from $M_x$ only if a conflict is discovered at Line \ref{line:epp_case3}).
For an incoming tuple with edge $(u,v)$ with label $l$, there can be at most $k^2$ pairs of prefix path $p$ of $(u,s)$ and node $(v,t)$, for each $s,t \in S$.
Algorithm \textbf{\ref{alg:extend}} is invoked for each such pair at most once; either (i) when the edge $e=(u,v)$ first appears in the stream and $(u,s) \in T_x$ but not $(v,t)$ (Line \ref{line:deleterspq_invokeepp1}), or (ii) $e=(u,v)$ with label $l$ already appeared in the stream when $(u,s)$ is first added to $T_x$ and $(v,t) \notin T_x$ (Line \ref{line:deleterspq_invokeepp2}).
Over a stream of $m$ tuples, Algorithm \textbf{\ref{alg:extend}} is invoked $\mathcal{O}(m \cdot k^2)$ times for the maintenance of a spanning tree $T_x$. Therefore, amortized cost of maintaining a spanning tree $T_x$ over a stream of $m$ edges is $\mathcal{O}(k^2)$.
Given that there are $\mathcal{O}(n)$ spanning trees, one for each $x \in V$, the amortized complexity of Algorithm \textbf{\ref{alg:ProcessEdgeRSPQ}} is $\mathcal{O}(n \cdot k^2)$ per tuple. 

\end{proof}

Consequently, the amortized cost of Algorithm {\textbf{\ref{alg:ProcessEdgeRSPQ}}} is linear in the number $n$ of vertices in the snapshot graph $G_{W,\tau}$, similarly to its RAPQ counterpart (described in {\S} {\ref{sec:arbitrary-delete}}).
The algorithm  \textbf{\ref{alg:ProcessEdgeRSPQ}} processes explicit deletions  in the same manner as its RAPQ counterpart (described in \S \ref{sec:arbitrary-delete}).
Similarly, the amortized cost of processing sequence of $m$ explicit deletions is $\mathcal{O}(n^2 \cdot k)$ in the absence of conflicts, where $n$ is the number of distinct vertices and $k$ is the number of states in the corresponding automaton of a RSPQ $Q_R$.





\section{Experimental Analysis}
\label{sec:experimens}

We study the feasibility of the proposed persistent RPQ evaluation algorithms on both real-world and synthetic streaming graphs. 
We first systematically evaluate the throughput and the edge processing latency of Algorithm \textbf{\ref{alg:arbitrary}} on append-only streaming graphs, and  analyze the factors affecting its performance (\S \ref{sec:experiments_analysis}).
Then, we assess its scalability by varying the window size $|W|$, the slide interval $\beta$ and the query size $|Q_R|$ (\S \ref{sec:experiments_scalability}).
The overhead of Algorithm \textbf{\ref{alg:rapq_delete_edge}} over Algorithm \textbf{\ref{alg:arbitrary}} for explicit deletions is analyzed in \S \ref{sec:experiments_deletion}  whereas \S \ref{sec:experiments_simplepath} analyzes the feasibility of \textbf{\ref{alg:ProcessEdgeRSPQ}} for persistent RPQ evaluation under simple path semantics.
Finally we compare our proposed algorithms with other systems (\S\ref{sec:experiments_comparison}). 
Since this the first work to address RPQ evaluation over streaming graphs, we perform this comparison with respect to an emulation of persistent RPQ evaluation on RDF systems with SPARQL property path support.

The highlights of our results are as follows:
\begin{enumerate}
    \item The proposed persistent RPQ evaluation algorithms maintain sub-millisecond edge processing latency  on real-world workloads, and can  process up-to tens of thousands of edges-per-second on a single machine.
    \item The tail (99th percentile) latency of the algorithms increases linearly with the window size $|W|$, confirming the amortized costs  in Table \ref{tab:amortized}.
    \item The cost of expiring old tuples grows linearly with the slide interval $\beta$, which enables constant overhead regardless of $\beta$ when amortized over the slide interval.
    \item Explicit deletions can incur up to 50\% performance degradation on tail latency, however the impact stays relatively steady with the increasing ratio of deletions. 
    \item Although RPQ evaluation under simple path semantics is NP-hard in the worst-case, the results indicate that the majority of the queries formulated on real-world and synthetic streaming graphs can  be evaluated with 2$\times$ to 5$\times$ overhead on the tail latency.
    \item  Our proposed algorithms achieve up to three orders of magnitude better performance when compared to existing RDF systems that emulate stream processing functionalities, substantiating the need for streaming algorithms for persistent RPQ evaluation on streaming graphs.
\end{enumerate}

\subsection{Experimental Setup}
\label{sec:experimental_Setup}

\subsubsection{Implementation}
The prototype system is an in-memory implementation in Java 13 and includes algorithms  in \S \ref{sec:arbitrary} and \S \ref{sec:simple} --- we leave out-of-core processing as future work.
The tree index $\Delta$ is implemented as a concurrent hash-based index where each vertex $v \in G_{W,\tau}$ is mapped to its corresponding spanning tree $T_x$.
Each spanning tree $T_x$ is assisted with an additional hash-based index for efficient node look-ups.
RAPQ (\textbf{\ref{alg:arbitrary}} and \textbf{\ref{alg:insertrapq_expiry}}), RSPQ algorithms (\textbf{\ref{alg:ProcessEdgeRSPQ}}, and \textbf{\ref{alg:insertrapq_expiry}}) employ \textit{intra-query parallelism} by deploying a thread pool to process multiple spanning trees in parallel that are accessed for each incoming edge.
Window management is parallelized similarly.

Experiments are run on a Linux server with 32 physical cores and 256GB memory with the total number of execution threads set to the number of available physical cores.
We measure the time it takes to process each tuple and report the average throughput and the tail latency ($99^{th}$ percentile) after ten minutes of processing on warm caches.
Our prototype implementation is a closed system where each arriving tuple $t_{\tau}$ is processed sequentially.
Thus, the throughput is inversely correlated with the mean latency.

\subsubsection{Workloads and Datasets}
\label{sec:experiments-datasets}
Although there exists streaming RDF benchmarks such as LSBench \cite{lsbench-code}
and Stream WatDiv ~\cite{gao:2018aa}, their workloads do not contain any recursive queries, and they generate streaming graphs with very limited form of recursion.
Therefore,  we formulate persistent RPQs  using the most common recursive queries found in real-world applications,
leveraging recent studies \cite{bonifati2017analytical, bonifati2019navigating} that analyze real-world SPARQL query logs.
We choose the most common 10 recursive queries from \cite{bonifati2019navigating}, which cover more than 99\% of all recursive queries found in Wikidata query logs.
In addition, we choose the most common non-recursive query (with no Kleene stars) for completeness, even though these are easier to evaluate as resulting paths have fixed size.
Table \ref{tab:common_queries} reports the set of real-world RPQs
used in our experiments.
We set $k=3$ for queries with variable number of edge labels as the SO graph only has three distinct labels.
Table \ref{tab:query_bindings} lists the values of edge labels for graphs we used in our experiments.
We run these over the following real and synthetic edge-labeled graphs.


\begin{table}
\caption{The most common RPQs used in real-world workloads (retrieved from Table 4 in \cite{bonifati2019navigating}). }
\small
	\centering
	\begin{tabular}{ | r | c | r | c | }
	\hline
        Name & Query & Name & Query \\
	\hline
	$Q_1$ & $a^*$ & $Q_7$ & $a \circ b \circ c^*$ \\ 
	$Q_2$ & $a \circ b^* $ & $Q_8$ & $a?  \circ b^*$ \\
	$Q_3$ & $a \circ b^*  \circ c^* $ & $Q_{9}$ & $(a_1 + a_2 + \cdots + a_k )^+ $  \\
	$Q_4$ & $(a_1 + a_2 + \cdots + a_k )^*$ & $Q_{10}$ & $(a_1 + a_2 + \cdots + a_k) \circ b^*$  \\
	$Q_5$ & $a \circ b^* \circ c$ & $Q_{11}$ & $a_1 \circ a_2 \circ \cdots \circ a_k$\\
	$Q_6$ & $a^* \circ b^* $& & \\

	\hline
	\end{tabular}
\label{tab:common_queries}
\end{table}

\textbf{Stackoverflow} (SO)  is a temporal graph of user interactions on this website containing 63M interactions (edges) of 2.2M users (vertices), spanning 8 years \cite{paranjape2017motifs}.
Each directed edge $(u,v)$ with timestamp $t$ denotes an interaction between two users: (i) user $u$ answered user $v$'s questions at time $t$, (ii) user $u$ commented on user $v$'s question, or (iii) comment at time $t$.
SO graph is more homogeneous and much more cyclic than other datasets we used in this study as it contains only a single type of vertex and three different edge labels.
7 out of 11 queries in Table \ref{tab:common_queries} have at least 3 labels and cover all edges in the  graph.
Its highly dense and cyclic nature causes a high number of intermediate results and resulting paths; therefore, this graph constitutes the most challenging one for  the proposed algorithms.
We set the window size $|W|$ to 1 month and the slide interval $\beta$ to 1 day unless specified otherwise.

\textbf{LDBC SNB} is synthetic social network graph  that is designed to simulate real-world interactions in social networking applications \cite{sigmod15_erling:2015}.
We extract the update stream of the LDBC workload, which exhibits 8 different types of interactions users can perform.
The streaming graphs generated by LDBC consists of two recursive relations: $knows$ and $replyOf$. Therefore, $Q_4, Q_5, Q_9$ and $Q_{10}$ in Table \ref{tab:common_queries} cannot be meaningfully formulated over the LDBC streaming graphs; we use the others from Table \ref{tab:common_queries}. 
We use a scale factor of 10 with approximately 7.2M users and posts (vertices) and  40M user interactions (edges).
LDBC update stream spans 3.5 months of user activity and we set the window size $|W|$ to 10 days and the slide interval $\beta$ to 1 day unless  specified otherwise.

\textbf{Yago2s}
is a real-world RDF dataset containing 220M triples (edges) with approximately 72M different subjects (vertices) \cite{yago2s-dataset}.
Unlike existing streaming RDF benchmarks, Yago2s includes a rich schema ($\sim$100 different labels) and allows us
to represent the full set of queries listed in Table \ref{tab:common_queries}.
To emulate sliding windows on Yago2s RDF graph, we assign a monotonically non-decreasing timestamp to each RDF triple at a fixed rate.
Thus, each window defined over Yago2s has equal number of edges.
We set the window size $|W|$ such that each window contains approximately 10M edges and the slide interval $\beta$ to  1M edges, unless specified otherwise.


\begin{table}
\caption{ Values of label variables in real-world RPQs (Table \ref{tab:common_queries}) for graphs we tested.}
\small
	\centering
	\begin{tabular}{ | r | c | }
	\hline
        Graph & Predicates \\
	\hline
	SO & \textit{knows}, \textit{replyOf}, \textit{hasCreator}, \textit{likes} \\ 
    LDBC SNB & \textit{a2q}, \textit{c2a}, \textit{c2q} \\ 
    Yago2s & \textit{happenedIn}, \textit{hasCapital}, \textit{participatedIn} \\ 

	\hline
	\end{tabular}
\label{tab:query_bindings}
\end{table}

Additionally, we use {\textbf{gMark}}~{\cite{bagan2016gmark}} graph and query workload generator to systematically analyze the effect of query size $|Q_R|$.
We use a pre-configured schema that mimics the characteristics of LDBC SNB graph to generate a synthetic graph with 100M vertices and 220M edges, and create synthetic query workloads where the query size ranges from 2 to 20 (the size of a query, $|Q_R|$, is the number of labels in the regular expression $R$ and the number of occurrences of $*$ and $+$). 
Each RPQ is formulated by grouping labels into concatenations and alternations of size up to 3 where each group has a 50\% probability of having $*$ and $+$.
As gMark generates the entire LDBC SNB network as a single static graph, we assign a monotonically non-decreasing timestamp to each edge at a fixed rate.

\subsection{Throughput \& Tail Latency}
\label{sec:experiments_analysis}

\begin{figure*}
  \centering
      \subfigure[Yago2s]
      {
        \includegraphics[width=2.1in]{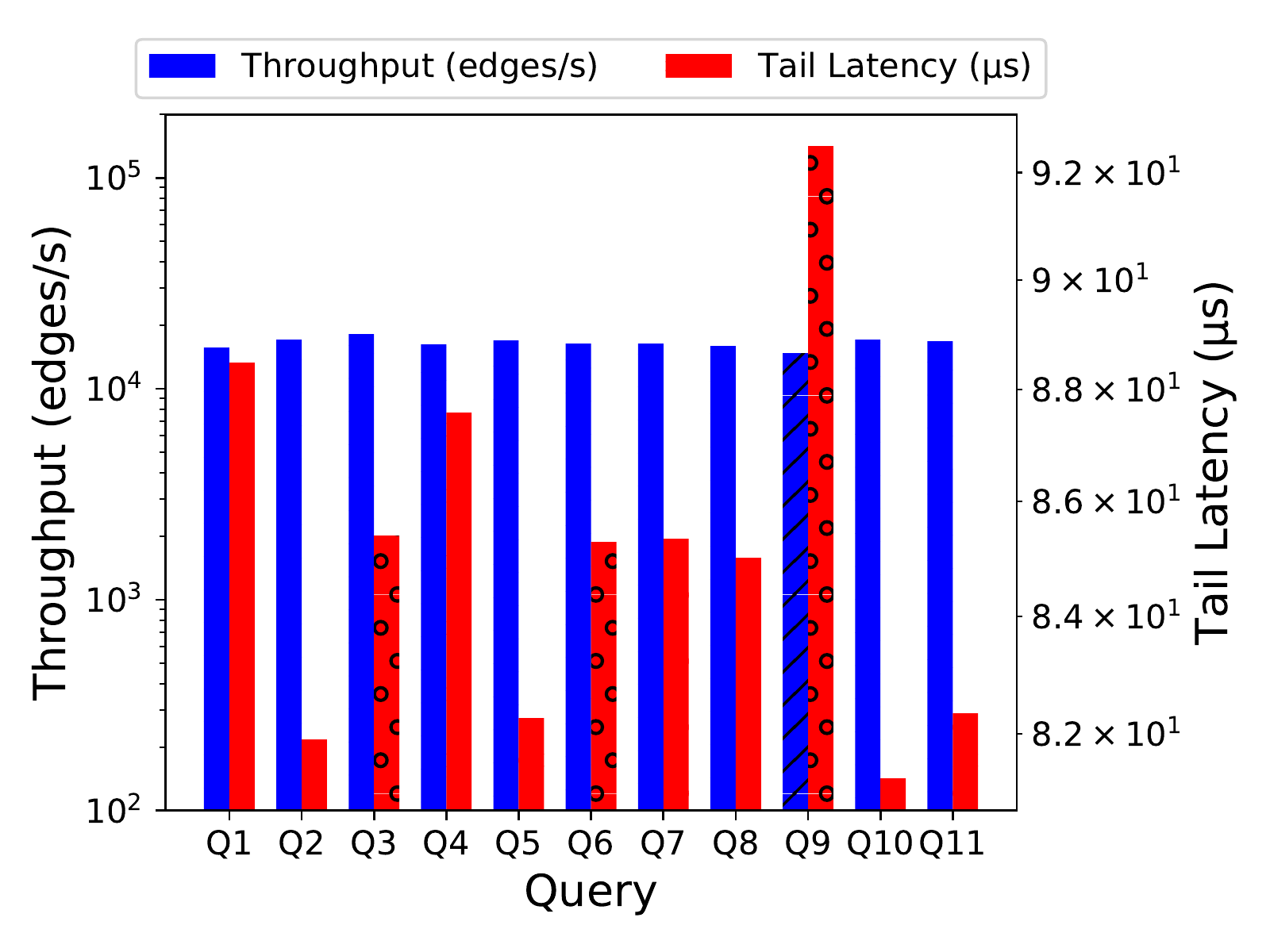}
        \label{fig:yago2s-latency}
    }
    \subfigure[LDBC SF10]
    {
        \includegraphics[width=2.1in]{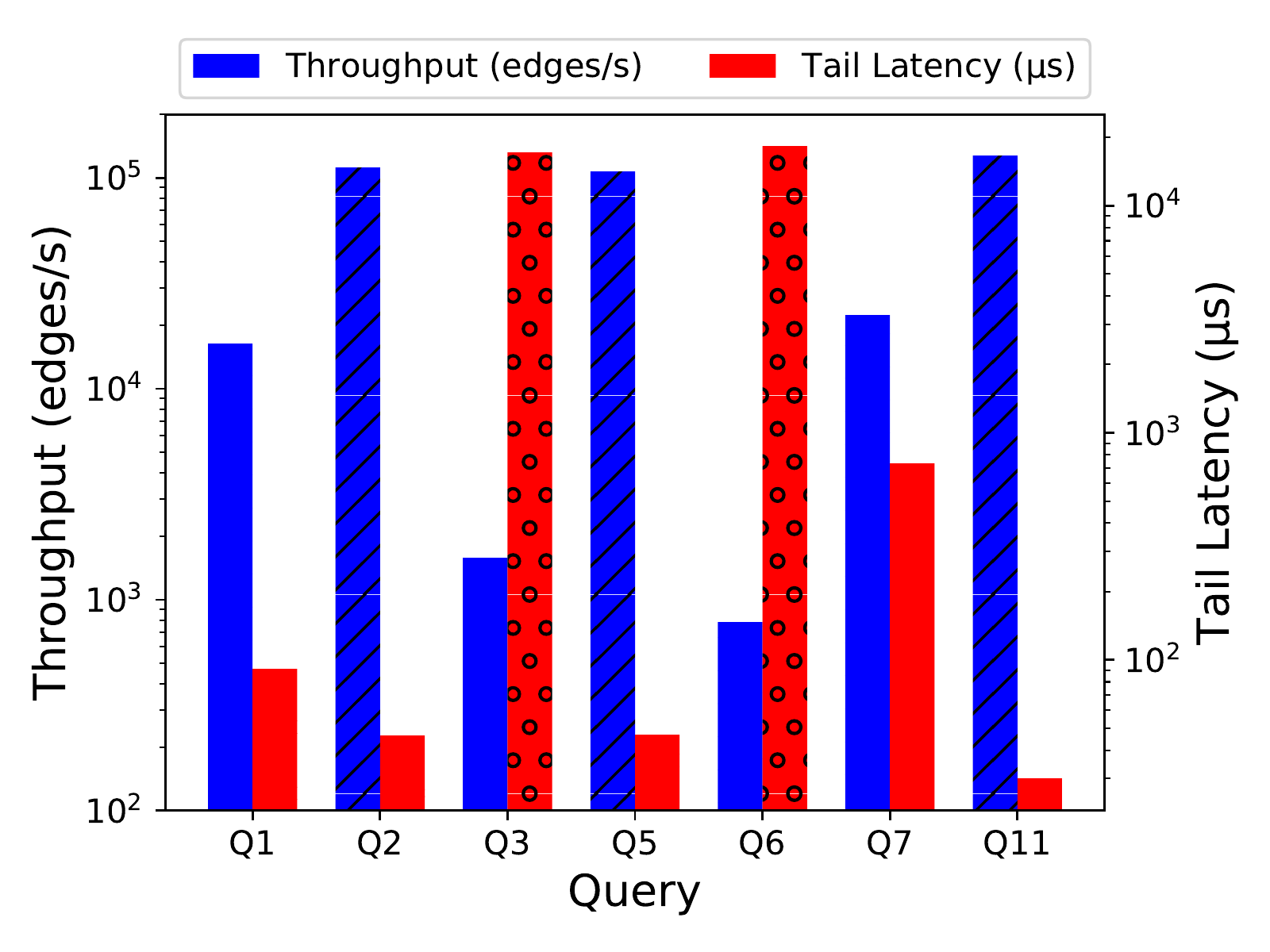}
        \label{fig:ldbcsf10-latency}
    }
    \subfigure[Stackoverflow]
    {
        \includegraphics[width=2.1in]{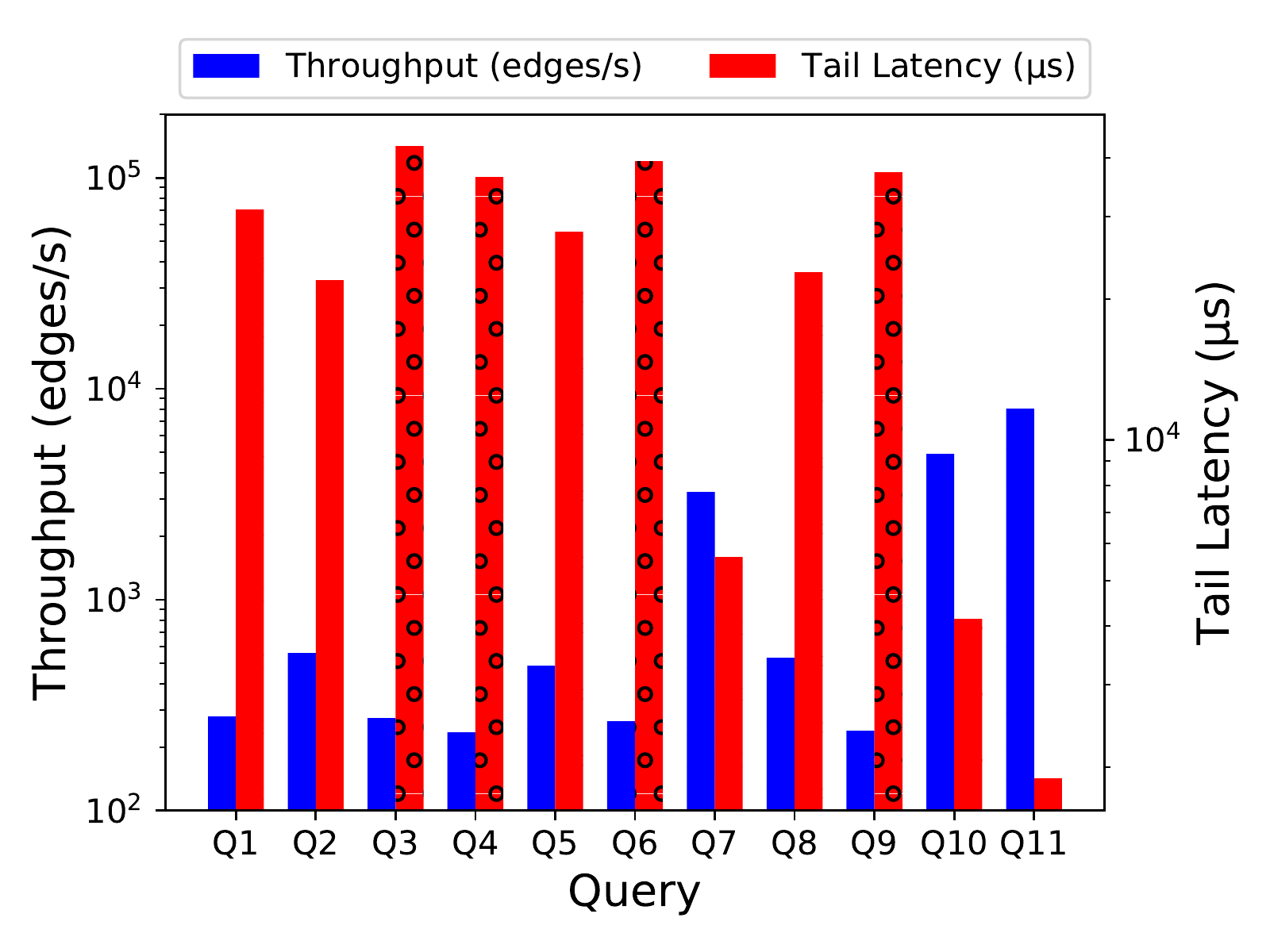}
        \label{fig:stackoverflow-latency}
    }
    \caption{Throughput and tail latency of the Algorithm \textbf{\ref{alg:arbitrary}}. Y axis is given in log-scale.}
    \label{fig:latency-tput}
\end{figure*}

Figure \ref{fig:latency-tput} shows the throughput and tail  latency of Algorithm \textbf{\ref{alg:arbitrary}} for all queries on all datasets.
The algorithm  discards a  tuple  whose label is not in the alphabet $\Sigma_Q$ of $Q_R$ as it cannot be part of any resulting path. 
Hence, we only measure and report  latency of tuples whose labels match a label in the given query.
First, we observe that the performance is generally lower for the SO graph due to its label density and highly cyclic nature.
The tail latency of Algorithm \textbf{\ref{alg:arbitrary}} is below 100ms even for the slowest query $Q_3$ on the SO graph and it is in sub-milliseconds for most queries on Yago2s and LDBC graphs.
Similarly, the throughput of the algorithm varies from hundreds of edges-per-second for the SO graph (Figure \ref{fig:stackoverflow-latency}) to tens of thousands of edges-per-second for LDBC  graph (Figure \ref{fig:ldbcsf10-latency}).

\begin{figure}
  \centering
  \includegraphics[width=2.5in]{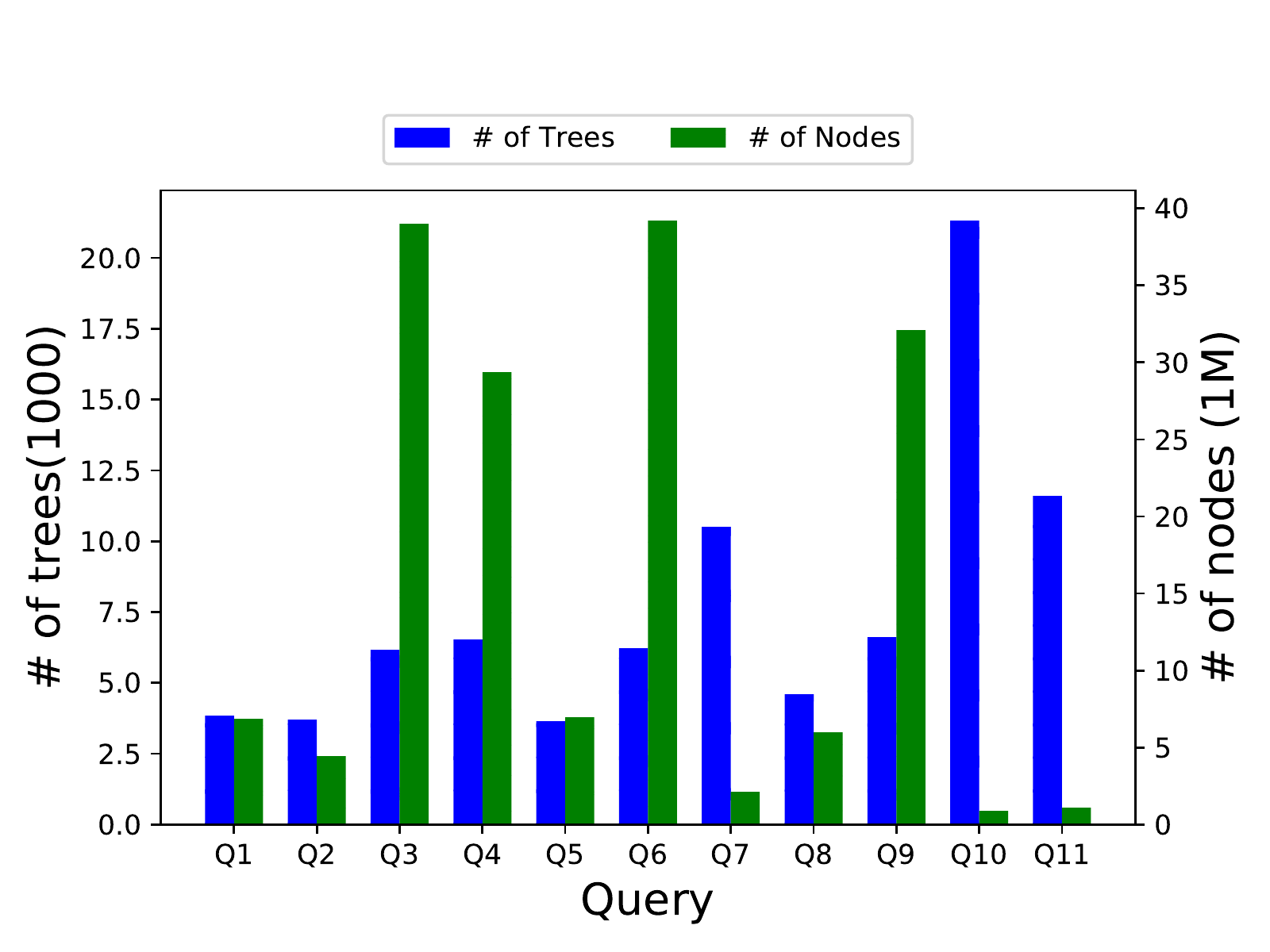}
  \caption{Size of the tree index $\Delta$ on the SO graph.} 
  \label{fig:sx-treesize}
\end{figure}

We plot the total number of trees and nodes in the tree index $\Delta$ of Algorithm \textbf{\ref{alg:arbitrary}} on the SO graph to better understand diverse performance characteristics of different queries.
Remember that nodes and their corresponding paths in a spanning tree $T_x \in \Delta$ represent partial results of a persistent RPQ.
Therefore, the amount of work performed by the algorithm grows with the size of tree index $\Delta$.
As expected, we observe a negative correlation between the throughput of a query (Figure \ref{fig:stackoverflow-latency}) and its tree index size (Figure \ref{fig:sx-treesize}).
It is known that cycles have significant impact on the run time of queries \cite{bonifati2017analytical}, and our analysis confirms this.
In particular, $Q_3$ and $Q_6$ have the largest index sizes and therefore the lowest throughput, which can be explained by the fact that they contain multiple Kleene stars. 
Similarly, $Q_4$ and $Q_9$ have a Kleene star over alternation of symbols, which covers all the edges in the graph as the SO graph has only three types of user interactions.
Therefore, $Q_4$ and $Q_9$ both have large index sizes, which negatively impacts the performance.
In parallel, $Q_{11}$ has the highest throughput on all datasets as it is the only fixed size, non-recursive query employed in our experiments.

\subsection{Scalability \& Sensitivity Analysis}
\label{sec:experiments_scalability}

\begin{figure*}
\centering
    \subfigure[Tail Latency]
    {
        \includegraphics[width=3.2in]{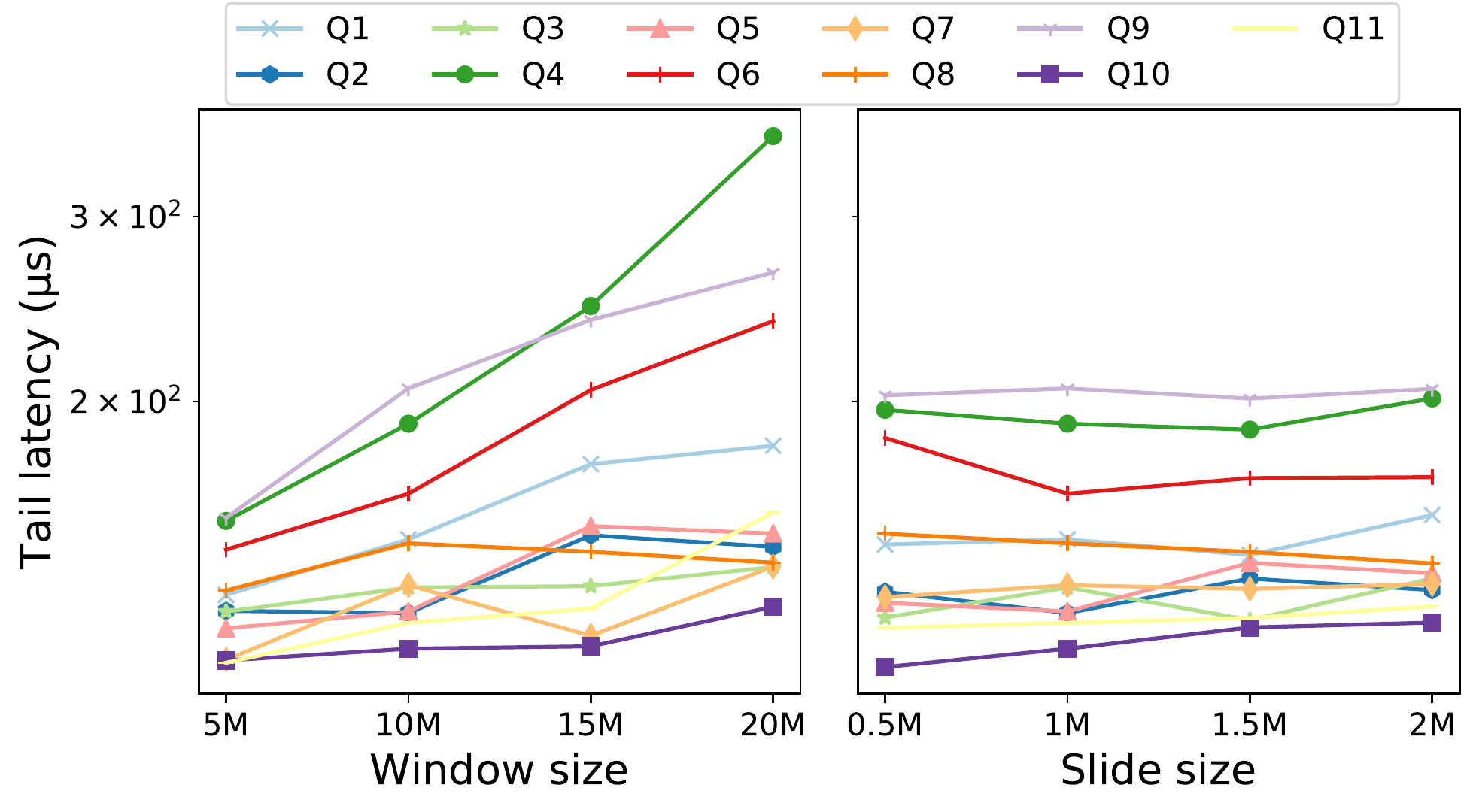}
        \label{fig:yago2s-window-slide}
    }
    \subfigure[Window Management Time]
    {
        \includegraphics[width=3.2in]{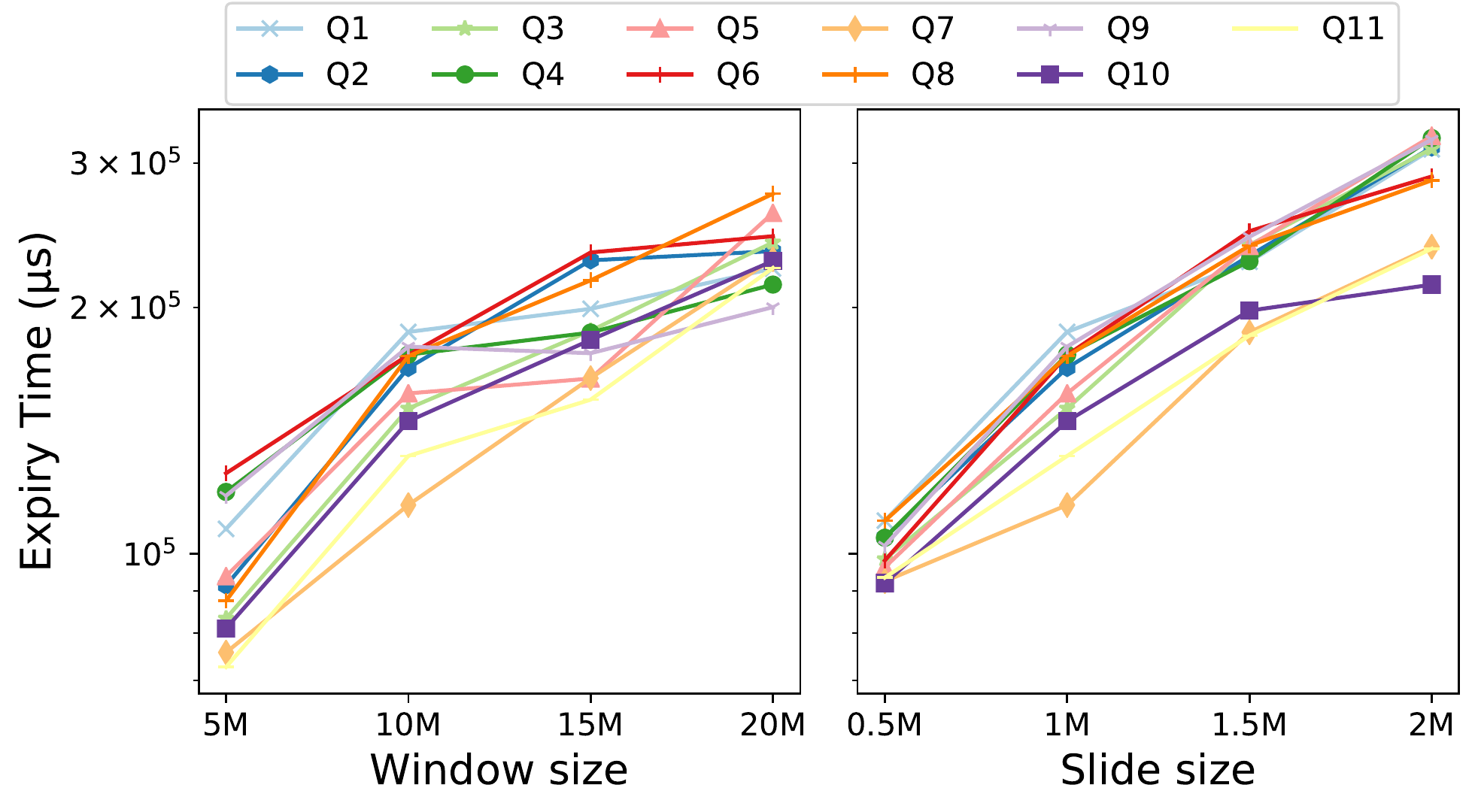}
        \label{fig:yago2s-expiry}
    }
    \caption{The tail latency (a) and the average window maintenance cost (b) with various $|W|$ and  $\beta$.} 
\end{figure*}


In this section, we first assess the impact of the window size $|W|$ and the slide interval $\beta$ on algorithm performance; then, we turn our attention to performance implications of the use of DFAs and the query size $|Q_R|$.

We use the Yago2s dataset for this experiment as windows with a fixed number of edges we created over  Yago2s enable us to precisely assess the impact of window size.
Figure \ref{fig:yago2s-window-slide} presents the tail latency of our algorithm where the window size changes from 5M edges to 20M edges with 5M intervals.
As expected, the tail latency for all queries we tested increases with increasing $|W|$, which conforms with the amortized cost analysis of Algorithm \textbf{\ref{alg:arbitrary}} in \S \ref{sec:arbitrary-insertonly}.
Similarly, we observe that the time spent on Algorithm \textbf{\ref{alg:insertrapq_expiry}} increases with increasing window size $|W|$ (Figure \ref{fig:yago2s-expiry}), in line with the complexity analysis given in \S \ref{sec:arbitrary-insertonly}.
We replicate the same experiment using LDBC and Stream WatDiv datasets by varying the scale factor which in turn increases the number of edges in each window.
Our results show a degradation on the performance with increasing scale factor on Stream WatDiv, confirming our findings on Yago2s.
However, we do not observe a similar trend on LDBC graphs, which is due to the linear scaling of the total number of edges and vertices with the scale factor. 
Increasing the scale factor reduces the density of the graph, which may cause the proposed algorithms to perform even better in some instances due to a smaller tree index size.
Furthermore, only a subset of queries can be formulated on these datasets as described previously.
Therefore, we only report our findings on Yago2s graph.


Next, we assess the impact of the slide interval $\beta$ on the performance of our algorithms.
Figure \ref{fig:yago2s-window-slide} plots the tail latency of Algorithm \textbf{\ref{alg:arbitrary}} against  $\beta$ and shows that the slide interval does not impact the performance.
Recall that Algorithm \textbf{\ref{alg:insertrapq_expiry}} is invoked periodically to remove expired tuples from the tree index $\Delta$.
It first identifies the set of expired nodes in a given spanning tree $T_x \in \Delta$, and searches their incoming edges to find a valid edge from  a valid node in $T_x$.
Therefore, Algorithm \textbf{\ref{alg:insertrapq_expiry}} might traverse the entire snapshot graph $G_{W,\tau}$ in the worst-case, regardless of the slide interval $\beta$.
However, Figure \ref{fig:yago2s-expiry} shows that the time spent on expiry of old tuples grows with increasing $\beta$, which causes its overhead to stay constant over time regardless of the slide interval $\beta$.
Therefore, this algorithm  is robust to the slide interval $\beta$. 
It also suggests that the complexity analysis  of Algorithm \textbf{\ref{alg:insertrapq_expiry}} given in \S \ref{sec:arbitrary-insertonly} is not tight.

\begin{figure}
    \centering
    \includegraphics[width=2.5in]{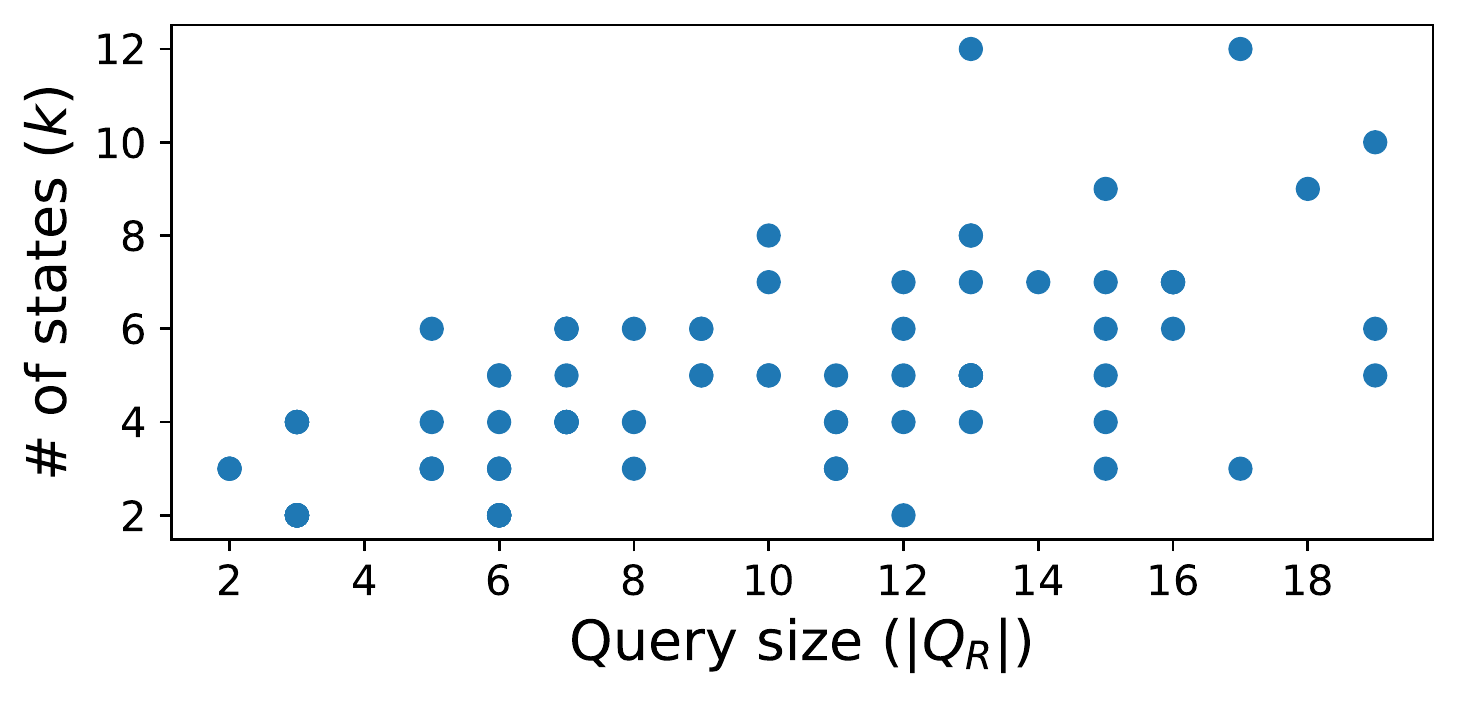}
    \caption{The number of states $k$ in corresponding DFAs of queries in the synthetic RPQ workload.}
    \label{fig:length-states}
\end{figure}

Finally, we analyze the effect of the query size $|Q_R|$ and the automata size $k$ on the performance of our algorithms using a set of 100 synthetic RPQs that are generated using gMark. 
Combined complexities of the algorithms presented in {\S \ref{sec:arbitrary}} and {\S \ref{sec:simple}} are polynomial in the number of states $k$, which might be exponential in the query size $|Q_R|$. 
Figure {\ref{fig:length-states}} shows the total number of states in minimized DFAs for 100 RPQs we created using gMark; in practice, we found out that the size of the DFA does not grow exponentially with increasing query size for the considered RPQs despite the theoretical upper bound.
Green et al. \cite{green2003processing} has also indicated that exponential DFA growth is of little concern for most practical applications in the context of XML stream processing.

Next, we focus on the impact of the automata size $k$ on  performance.
Figure {\ref{fig:tput-states}} plots the throughput against the number of states $k$ in the minimal automata for synthetic RPQs generated by gMark.
We do not observe a significant impact of $k$ on  performance; yet, performance differences for queries with the same number of states in their corresponding DFA can be up to $6\times$. 
Such performance difference for RPQ evaluation has already been observed on static graphs and has been attributed to query label selectivities and the size of intermediate results {\cite{yakovets2016query}}. 
To further verify this hypothesis in the streaming model, we plot the throughput against the tree index $\Delta$ size for queries with $k=5$ in Figure {\ref{fig:tput-delta}}.
Confirming our results in {\S} {\ref{sec:experiments_analysis}}, we observe a negative correlation between the throughput of a query and its tree index size.

\begin{figure}
    \centering
    \includegraphics[width=2.5in]{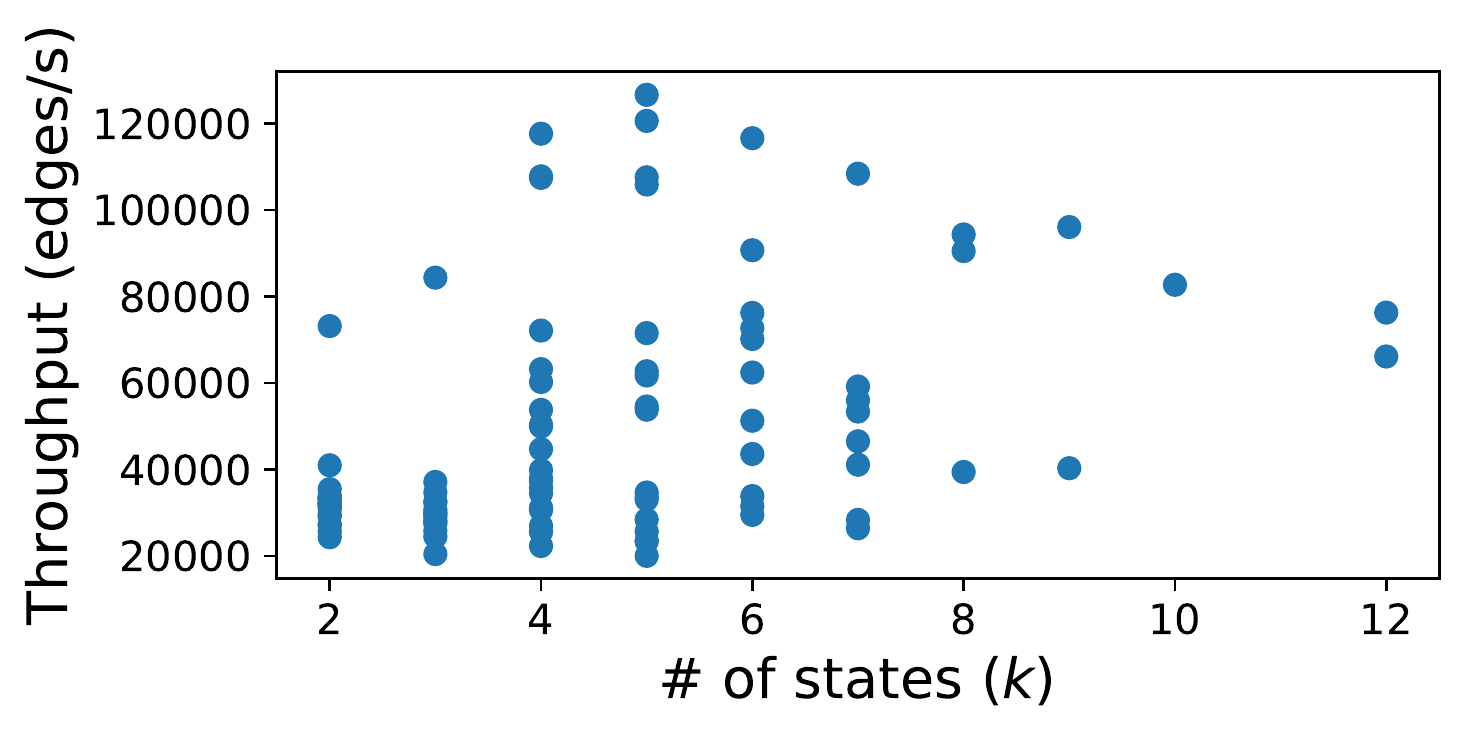}
    \caption{Throughput of the Algorithm \ref{alg:arbitrary} for the synthetic RPQ workload.}
    \label{fig:tput-states}
\end{figure}

\begin{figure}
    \centering
    \includegraphics[width=2.5in]{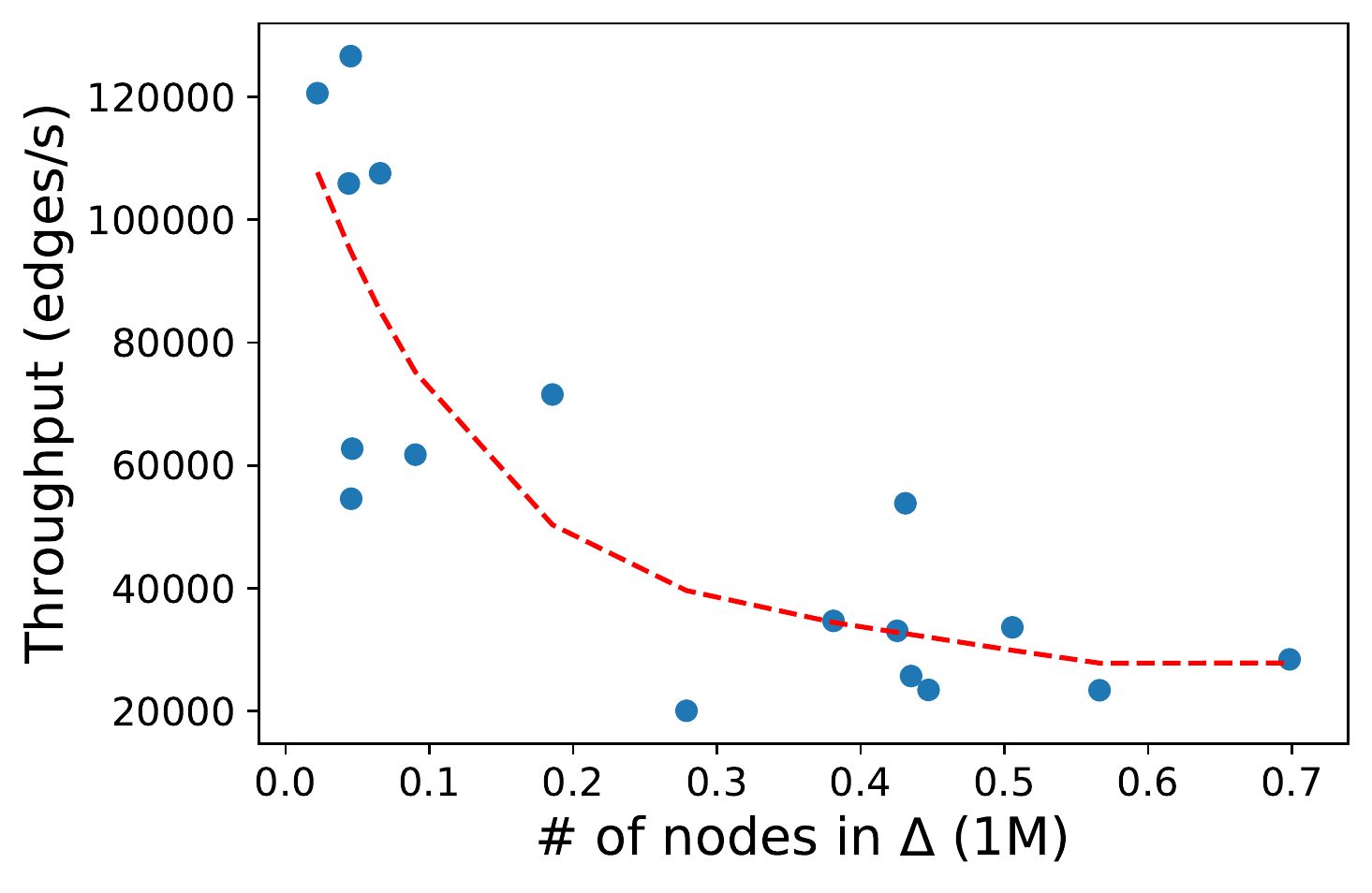}
    \caption{Throughput and tree index $\Delta$ size for synthetic RPQs with $k=5$ }
    \label{fig:tput-delta}
\end{figure}

\vspace{-4mm}
\subsection{Explicit Edge Deletions}
\label{sec:experiments_deletion}

\begin{figure}
    \centering
    \includegraphics[width=2.5in]{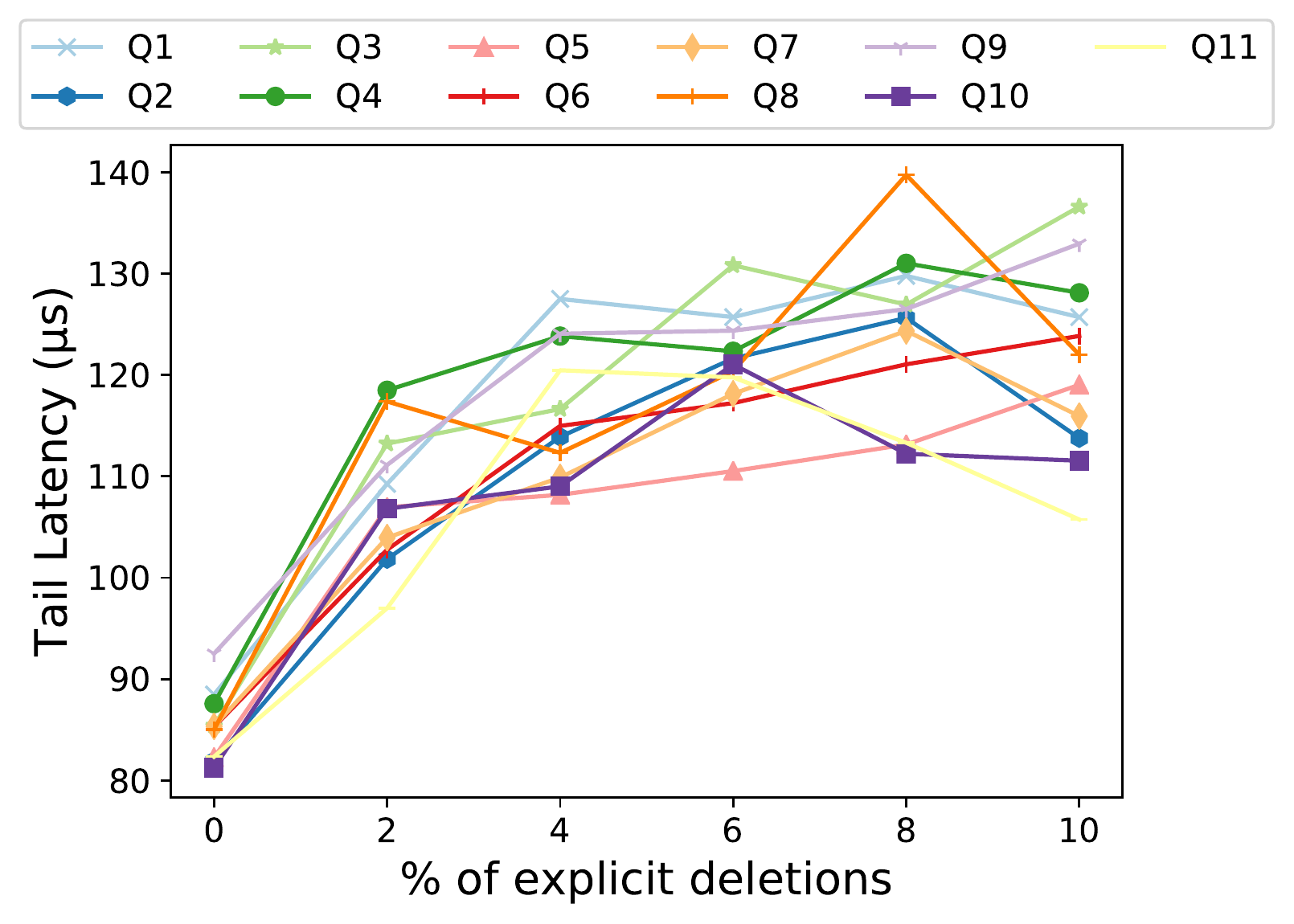}
    \caption{Impact of the ratio of explicit deletions  on tail latency for all queries on Yago2s RDF graph.}
    \label{fig:delete}
\end{figure}

Although most real-life streaming graphs are append-only, some applications require explicit edge deletions, which can be processed in our framework (\S \ref{sec:arbitrary-delete}). 
We generate explicit deletions by reinserting a previously consumed edge as a negative tuple and varying the ratio of negative tuples
in the stream.
Figure \ref{fig:delete} plots tail latency of all queries on Yago2s  varying deletion ratio from 2\% to 10\%.
In line with our findings in the previous section,  explicit deletions incur performance degradation due to the overhead of the expiry procedure (Figure \ref{fig:yago2s-expiry}).
However, this overhead quickly flattens and does not increase with the deletion ratio.
This is explained by the fact that the sizes of the snapshot graph $G_{W,\tau}$, and the tree index $\Delta$ decrease with increasing deletion ratio.

\subsection{RPQ under Simple Path Semantics}
\label{sec:experiments_simplepath}

We showed (\S \ref{sec:simple}) that the amortized time complexity of Algorithm \textbf{\ref{alg:ProcessEdgeRSPQ}} under simple path semantics is the same as its RAPQ counterpart in the absence of conflicts.

\begin{table}
    \small
    \caption{Queries that can be evaluated under simple path semantics \& the relative slowdown.}
    \begin{tabular}{|c|c|c|}
        \hline
        Graph & \makecell[c]{Succesfull Queries} & \makecell[c]{Latency Overhead} \\
        \hline
        Yago2s & All & $1.8\times - 2.1\times$ \\
        Stackoverflow & $Q_1, Q_4, Q_7, Q_{10}, Q_{11}$ & $1.4\times - 5.4\times$ \\
        LDBC SF10 & $Q_1, Q_2, Q_5, Q_7, Q_{11}$ & $1.8\times - 3\times$\\
        \hline
    \end{tabular}

    \label{tab:rspq_overhead}
\end{table}

In this section, we empirically analyze the feasibility and the performance of this algorithm.
Table \ref{tab:rspq_overhead} lists the queries that can be successfully evaluated under simple path semantics on each graph. 
$Q_1$, $Q_4$ and $Q_{11}$ are restricted regular expressions, a condition that implies conflict-freedom in any arbitrary graph.
Therefore, these queries are successfully evaluated on all graphs we tested (except $Q_4$ that cannot be defined over LDBC graph as discussed in \S \ref{sec:experiments-datasets}).
In particular, we observe that all queries are free of conflicts on Yago2s, and they can successfully be evaluated.

Table \ref{tab:rspq_overhead} also reports the overhead of enforcing simple path semantics on the tail  latency.
This overhead is simply due to conflict detection and the maintenance of markings for each spanning tree in the tree index $\Delta$.
Overall, these results suggest the feasibility of enforcing simple path semantics for majority of real-world queries, considering that most queries are conflict-free on heterogeneous, sparse graphs such as RDF graphs and social networks. 
Conversely, we argue that arbitrary path semantics may be the only practical alternative for applications with homogeneous, highly cyclic graphs such as communication networks like Stackoverflow.


\subsection{Comparison with Other Systems}
\label{sec:experiments_comparison}

This is the first work that investigates the execution of persistent RPQs over streaming graphs; therefore, there are no systems with which a direct comparison can be performed. 
However, there are a number of streaming RDF systems that can potentially be considered. 
These were reviewed in \S \ref{sec:related}; unfortunately, as noted in that section, these systems only support SPARQL v1.0 and therefore cannot handle path expressions or recursive queries.

With the introduction of property paths in SPARQL v1.1, the support for path queries have been added to a few RDF systems such as Virtuoso~{\cite{erling2009}} and RDF-3X~{\cite{gubichev2013sparqling,gubichev2015query}}. 
However, these RDF systems are designed for static RDF datasets, and they do not support persistent query evaluation.
We emulate persistent queries over Virtuoso to highlight the benefit of using incremental algorithms for persistent query evaluation on streaming graphs.

\begin{figure}
    \centering
    \includegraphics[width=2.5in]{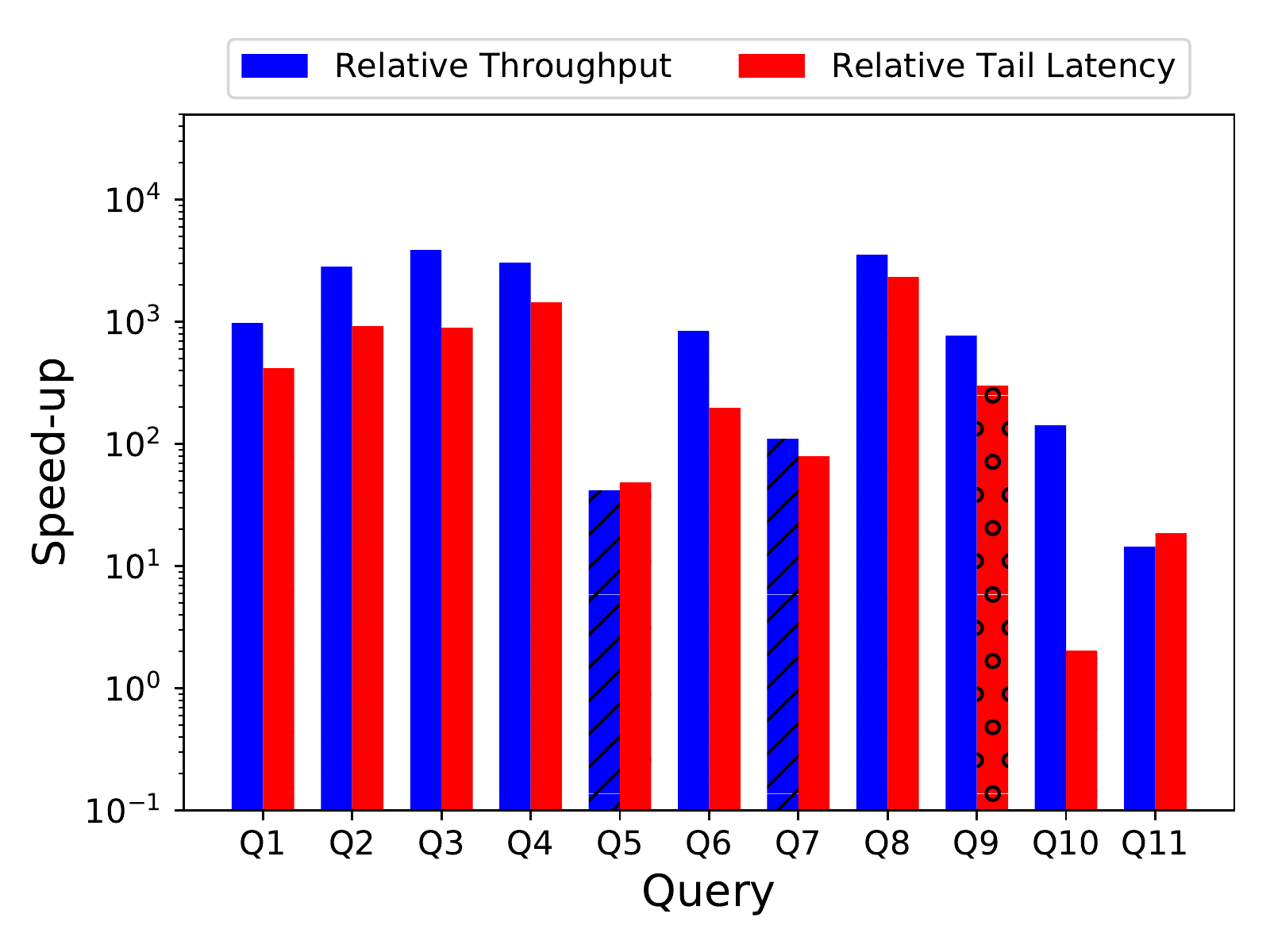}
    \caption{Relative speed-up of Algorithm \ref{alg:arbitrary} over Virtuoso for all queries on Yago2s RDF graph. Y axis is given in log-scale.}
    \label{fig:comparison-yago2s}
\end{figure}

We develop a middle layer on top of Virtuoso that emulates persistent query evaluation over sliding windows, similar to Algorithm \textbf{\ref{alg:arbitrary}}.
This layer inserts each incoming tuple into Virtuoso and evaluates the query on the RDF graph that is constructed from the content of the window $W$ at any given time $t$.
For fairness, we configure Virtuoso to work entirely in memory and disable transaction logging to eliminate the overhead of transaction processing. 
We use Yago2s RDF graph with default $|W|$ and $\beta$ for this experiment.
We need to modify $Q_1, Q_4, Q_6, Q_8, Q_9$ and $Q_{10}$ by prepending a single predicate $a$ to each query due to Virtuoso's limitation forbidding vertex variables on both ends of property paths at the same time.
Figure {\ref{fig:comparison-yago2s}} plots the average speed-up of \textbf{\ref{alg:arbitrary}} with respect to this simulation for both throughput and tail-latency.
\textbf{\ref{alg:arbitrary}} consistently outperforms Virtuoso across all queries and provide up to 3 orders of magnitude better throughput and tail latency.
This is because Virtuoso re-evaluates the RPQ on the entire window and cannot utilize the results of previous computations.
Conversely, \textbf{\ref{alg:arbitrary}} indexes traversals in $\Delta$ and only explores the part of the snapshot graph $G_{W,\tau}$ that were not previously explored.
In summary, these results suggest that incremental evaluation as in the proposed algorithms have significant performance advantages in executing RPQs over streaming graphs.


\section{Related Work}
\label{sec:related}

\textbf{Stream Processing Systems:}
Early research on stream processing primarily adopt the relational model and its query operators in the streaming settings
(STREAM ~\cite{abw02}, Aurora ~\cite{abadi2003}, Borealis ~\cite{cidr05_abadiabcchlmrrtxz05}).
Whereas, modern Data Stream Processing Systems (DSPS) such as Storm ~\cite{toshniwal2014storm}, Heron \cite{sigmod15_kulkarni:2015}, Flink \cite{carbonekemht15}
are mostly scale-out solutions that do not necessarily offer a full set of DBMS functionality.
Existing literature  (as surveyed by Hirzel et al. \cite{HirzelBBVSV18}) heavily focus on general-purpose systems and do not consider core graph querying functionality such as \textit{subgraph pattern matching} and \textit{path navigation}.

There has been a significant amount of work on various aspects of RDF stream processing\footnote{\url{https://www.w3.org/community/rsp/wiki/Main_Page}}. 
Calbimonte~\cite{calbimonte2017linked} designs a communication interface for streaming RDF systems based on the Linked Data Notification protocol. 
TripleWave~{\cite{mauri2016triplewave}}  focuses on the problem of RDF stream deployment and introduces a framework for publishing RDF streams on the web. 
EP-SPARQL~{\cite{anicic2011ep}} extends SPARQLv1.0  for reasoning and a complex event pattern matching on RDF streams.
Similarly, SparkWave~{\cite{komazec2012sparkwave}} is designed for streaming reasoning with schema-enhanced graph pattern matching and relies on the existence of RDF schemas to compute entailments.
None of these are processing engines, so they do not provide query processing capabilities. 
Most similar to ours are streaming RDF systems with various SPARQL extensions for persistent query evaluation over RDF streams such as C-SPARQL \cite{barbieri2009c}, CQELS \cite{le2011native}, SPARQL$_{stream}$~\cite{calbimonte2010enabling} and W3C proposal RSP-QL \cite{dell2015towards}. 
However, these systems are designed for SPARQLv1.0, and they do not have the notion of \textit{property path}s from SPARQLv1.1. Thus one cannot formulate path expressions such as RPQs that cover more than 99\% of all recursive queries abundantly found in massive Wikidata query logs {\cite{bonifati2019navigating}}.
The lack of property path support of these systems is previously reported by an independent RDF streaming benchmark, SR-Bench~{\cite{zhang:2012aa}} (see Table 3 in {\cite{zhang:2012aa}}).
Furthermore, query processing engines of these systems do not employ incremental operators, except 
Sparkwave~\cite{komazec2012sparkwave} that focuses on stream reasoning. 
On the contrary, our proposed algorithms incrementally maintain results for a persistent query $Q_R$ as the graph edges arrive.
Our contributions are orthogonal to existing work on streaming RDF systems, although the algorithms proposed in this paper can be integrated into these systems as they incorporate SPARQLv1.1 (i.e., property paths) to provide native RPQ support.

\noindent
\textbf{Streaming \& Dynamic Graph Theory:}
Earlier work on streaming graph algorithms is motivated by the limitations of main memory,
and existing literature has widely adopted the \textit{semi-streaming} model for graphs where the set of vertices can be stored in memory but not the set of edges \cite{muthukrishnan2005data}, due to infeasibility of graph problems in sublinear space.
There exist a plethora of approximation algorithms in this model, and we refer interested readers to \cite{mcgregor2014graph} for a survey.

Graph problems are widely studied in the dynamic graph model where algorithms may use the necessary memory to store the entire graph and compute how the output changes as the graph is updated.
Examples include connectivity \cite{kapron2013dynamic}, shortest path \cite{bernstein2016maintaining}, transitive closure \cite{lkacki2011improved}.
Most related to ours is dynamic reachability, which can be used to solve RPQ under arbitrary path semantics given the entire product graph (Definition \ref{def:product_graph}).
The state-of-the-art dynamic reachability algorithm has $\mathcal{O}(m+n)$ amortized update time \cite{roditty2016fully}.
Our proposed algorithms have a lower amortized cost,  $\mathcal{O}(n)$, for insertions at the expense of  $\mathcal{O}(n^2)$ amortized time for deletions -- a trade-off justified by the insert-heavy nature of real-world streaming graphs. 
Fan et al. \cite{fan2017incremental} characterize the complexity of various graph problems, including RPQ evaluation, in the dynamic model and show that most graph problems are unbounded under edge updates, i.e., the cost of computing changes to query answers cannot be expressed as a polynomial of the size of the changes in the input and output.
They prove that RPQ is bounded relative to its batch counterpart; the batch algorithm can be efficiently incrementalized by minimizing unnecessary computation.

\noindent
\textbf{Regular Path Queries:}
The research on RPQs focuses on various problems such as containment \cite{calvanese2000query}, enumeration \cite{martens2017enumeration}, learnability \cite{bonifati:hal-01068055}. 
Most related to ours is the RPQ evaluation problem.
The seminal work of Mendelzon and Wood \cite{mendelzon1995finding} shows that RPQ evaluation under simple path semantics is NP-hard for arbitrary graphs and queries.
They identify the conditions for graphs and regular languages where the 
introduce a maximal class of regular languages, $C_{tract}$, for which the problem of RPQ evaluation under simple path semantics is tractable. 

RPQ evaluation strategies follow two main approaches: automata-based and relational algebra-based.
$\mathbf{G}$ \cite{cruz1987graphical}, one of the earliest graph query languages,
builds a finite automaton from a given RPQ to guide the traversal on the graph.
Kochut et al.  \cite{kochut2007sparqler} 
study  RPQ evaluation in the context of SPARQL and propose an algorithm that uses two automatons, one for the original expression and one for the reversed expression, to guide a bidirectional BFS on the graph.
Addressing the memory overhead of BFS traversals, 
Koschmieder et al. \cite{koschmieder2012regular} decompose a query into smaller fragments based on rare labels and perform a series of bidirectional searches to answer individual subqueries.
A recent work by Wadhwa et al. \cite{Wadhwa:2019:EAR:3299869.3319882} uses random walk-based sampling for approximate RPQ evaluation.
The other alternative for RPQ evaluation is $\alpha$-RA that extends the standard relational algebra with the $\alpha$ operator for transitive closure computation \cite{agrawal1988alpha}. 
$\alpha$-RA-based RPQ evaluation strategies are used in various SPARQL engines \cite{erling2009}.
Histogram-based path indexes on top of a relational engine can speed-up processing RPQs with bounded length \cite{DBLP:conf/edbt/FletcherPP16}.
$\alpha$-RA-based RPQ evaluation is not suitable for persistent RPQ evaluation on streaming graphs as it relies on blocking join and $\alpha$ operators.
Hence, we adapt the automata-based RPQ evaluation in this paper and introduce non-blocking, incremental algorithms for persistent RPQ evaluation.
Besides, Yakovets et al. \cite{yakovets2016query} show that these two approaches are incomparable and they can be combined to explore a larger plan space for SPARQL evaluation.
Various formalisms such as pebble automata, register automata, monadic second-order logic with data comparisons extend RPQs with data values for the property graph model \cite{libkin2016querying,libkin2012regular}.
Although RPQs and corresponding evaluation methods are widely used in graph querying \cite{angles2017foundations, angles2018g, erling2009}, all of these works
focus on static graphs; ours is, to the best of our knowledge, the first work to consider persistent RPQ evaluation on streaming graphs.


\section{Conclusion and Future Work}
\label{sec:conclusion}
In this paper, for the first time, we study the problem of efficient persistent RPQ evaluation on sliding windows over streaming graphs.
The proposed algorithms process explicit edge deletions under both arbitrary and simple path semantics in a uniform manner.
In particular, the algorithm for simple path semantics has the same complexity as the algorithm for arbitrary path semantics in the absence of conflicts, and it admits efficient solutions under the same condition as the batch algorithm.
Experimental analyses using a variety of real-world RPQs and streaming graphs show that proposed algorithms can support up to tens of thousands of edges-per-second while maintaining sub-second tail latency.
Future research directions we consider in this project are: (i) to extend our algorithms with attribute-based predicates to fully support the popular property graph data model, and (ii) to investigate multi-query optimization techniques to share computation across multiple persistent RPQs.

\begin{acks}
This research was partially supported by grants from Natural Sciences and Engineering Research Council (NSERC) of Canada and  Waterloo-Huawei Joint Innovation Lab.
This research started during Angela Bonifati's sabbatical leave (supported by INRIA) at the University of Waterloo in 2019. 
\end{acks}


\bibliographystyle{ACM-Reference-Format}
\bibliography{publishers,publications,references_rpq}

\clearpage

\end{document}